\title[ Distributions of   resonances of   quasi-periodic operators]{  Distributions of   resonances of  supercritical  quasi-periodic operators }
\author{Wencai Liu}
\address[W. Liu]{ Department of Mathematics, Texas A\&M University, College Station, TX 77843-3368, USA} \email{liuwencai1226@gmail.com; wencail@tamu.edu}
\keywords{Almost Mathieu operator, supercritical quasiperiodic Schr\"odinger operator, Lyapunov exponent, second spectral transition line,  frequency resonance,   phase resonance, Anderson localization.}
\thanks{{\em 2020 Mathematics Subject Classification.} Primary: 47A10. Secondary: 81Q10, 47B39.}
\newcommand{\R}{\mathbb{R}}
\newcommand{\T}{\mathbb{T}}
\newcommand{\Z}{\mathbb{Z}}
\newcommand{\Q}{\mathbb{Q}}
\theoremstyle{plain}
\newtheorem{theorem}{Theorem}[section]
\newtheorem{corollary}[theorem]{Corollary}
\newtheorem{lemma}[theorem]{Lemma}
\newtheorem{proposition}[theorem]{Proposition}
\theoremstyle{definition}
\newtheorem{remark}[theorem]{Remark}
\begin{document}

	
	\begin{abstract}

We discover that 
		the  distribution of (frequency and phase) resonances   plays a role in determining  the spectral type of  supercritical  quasi-periodic Schr\"odinger operators.  In particular, we  disprove   the  second spectral transition line conjecture of 	 Jitomirskaya in the early  1990s.
		
	\end{abstract}
	\maketitle
	
	\section{Introduction}
In this paper, we  study  the quasi-periodic Schr\"odinger operator 
$H = H_{\lambda v,\alpha,\theta}$ defined on  $\ell^2(\Z)$: 
\begin{equation}\label{op1}
(H_{\lambda v ,\alpha,\theta}u)(n)=u({n+1})+u({n-1})+ \lambda v (\theta+n\alpha)u(n),  n\in\Z 
\end{equation}
where   $\lambda $ is the coupling constant, $v:\T=\R/\Z\to \R$ is the (analytic) potential,  $\alpha \in \R  \backslash \Q $ is the frequency, and $\theta \in \R$ is the phase. 
Letting  $v=2\cos 2\pi \theta$ in  \eqref{op1}, we obtain  the almost Mathieu operator 
\begin{equation}\label{op}
(H_{\lambda ,\alpha,\theta}u)(n)=u({n+1})+u({n-1})+ 2\lambda \cos2\pi  (\theta+n\alpha)u(n).   
\end{equation}
The almost Mathieu operator  is a popular model in both mathematics and physics.
We refer readers to  ~\cite{jmarx,yicm,jlz20,jcdm,sch1,congr}  for  the history and recent development of quasi-periodic Schr\"odinger operators.

%

By Avila's global theory ~\cite{agt},  the analytic quasi-periodic Schr\"odinger operator \eqref{op1} can be categorized into three cases  based on the behavior of the complexified  Lyapunov exponent of corresponding Schr\"odinger cocycles: subcritical, critical and supercritical.
For the almost Mathieu operator, the Lyapunov exponent  on the spectrum  only depends on $\lambda$ ~\cite{bj02}, so the three regimes are  characterized by the coupling constant $\lambda$, 
\begin{itemize}
	\item[1.]  subcritical: $|\lambda|<1$;
	\item[2.]  critical: $|\lambda|=1$;
	\item[3.]  supercritical: $|\lambda|>1$.
\end{itemize}
We remark that 
the three regimes of the almost Mathieu operator  are not surprising, which has already been seen from   the prediction of 
Aubry and Andr{\'e} ~\cite{aa}:
\begin{itemize}
	\item[1.] if $|\lambda|<1$, $H_{\lambda,\alpha,\theta}$  has purely absolutely continuous spectrum for all $\alpha\in\R$ and $\theta\in\R$;
	
	\item[2.] if $|\lambda|=1$, $H_{\lambda,\alpha,\theta}$  has purely  singular  continuous spectrum for all $\alpha\in\R\backslash\Q$ and $\theta\in\R$. 
	\item[3.] if $|\lambda|>1$, $H_{\lambda,\alpha,\theta}$  has only pure  point  spectrum for all $\alpha\in\R\backslash\Q$ and $\theta\in\R$. 
\end{itemize}

For the subcritical  (namely $|\lambda|<1$)   almost Mathieu operator, after many earlier  works ~\cite{j,ad08,aj10,gjls}, Avila finally  proved that the spectrum is purely absolutely continuous ~\cite{avi08} for any $(\alpha,\theta)\in\R^2$.   
From  the  almost reducibility   conjecture \cite{avila2010almost,av1}, 
subcritical  analytic quasi-periodic Schr\"odinger operators (for small $\lambda$,  the operator \eqref{op1} is always subcritical) 
always have purely  absolutely continuous spectrum.

It is expected that the critical quasi-periodic Schr\"odinger operator only supports the singular continuous spectrum. This has been recently proved for the  almost Mathieu operator \cite{j19,ajm}. 
We should mention that    critical analytic quasi-periodic Schr\"odinger operators do not typically happen \cite{agt}.

The supercritical  analytic quasi-periodic operator (for large $\lambda$,  the operator \eqref{op1} is always supercritical)  has  Anderson localization (only pure point spectrum with exponentially decaying eigenfunctions) for almost every $(\alpha,\theta)$ in $\R^2$ \cite{bg02,j,bgreen05}, but not  for all $\alpha\in\R\backslash\Q$  and $\theta\in\R$ since the arithmetics of frequencies $\alpha$ and phases $\theta$ play roles ~\cite{js94,gordon,as82,sar}.  This in particular implies that 
Part 3 of  Aubry-Andre's prediction does not hold  all $\alpha\in\R\backslash\Q$ and $\theta\in\R$.

From the discussion above,  one can see that  it is particularly  interesting to study supercritical quasi-periodic operators, which is exactly the main focus of this paper. 

By Kotani theory,   the  supercritical  analytic quasi-periodic operator   does not have  any absolutely continuous spectrum
~\cite{kot,ls99}.  It is  believed that  whether the spectrum is pure point or singular continuous     depends  on  resonances arising from the small denominators   and the  Lyapunov exponent.

There are two types of resonances appearing naturally.  The first one, the frequency resonance, was first observed in \cite{gordon,as82}, which is    quantified by 
\begin{equation}\label{gresf}
\beta(\alpha)=\limsup_{|k|\rightarrow \infty}-\frac{\ln ||k\alpha||_{\R/\Z}}{|k|},
\end{equation}
where  $ ||x||_{\R/\Z}={\rm dist} (x,\Z) $.

Another  one,  the phase resonance, happens for even functions \cite{js94} and is quantified by 
\begin{equation}\label{gresp}
\delta(\alpha,\theta)=\limsup_{|k|\to \infty}-\frac{\ln ||2\theta+k\alpha||_{\R/\Z}}{|k|}.
\end{equation}


%

A celebrated result of Jitomirskaya show that when $\beta(\alpha)=\delta(\alpha,\theta)=0$,   $H_{\lambda,\alpha,\theta} $ has  Anderson localization  for $|\lambda|>1$~\cite{jks05,j}. 
In~\cite{j94}, 	Jitomirskaya  conjectured  that 
for  the supercritical almost Mathieu operator,  
the spectral type (pure point spectrum/localization or singular continuous spectrum) completely determined by competitions between the Lyapunov exponent and  frequency/phase resonances quantified by  \eqref{gresf} and \eqref{gresp}. 
More precisely, 

\vspace{0.1in}

{\bf Conjecture 1}  
\begin{itemize}
	\item  the almost Mathieu operator  $H_{\lambda,\alpha,\theta}$ satisfies Anderson localization  if  $  |\lambda| > e^{\beta(\alpha)+\delta(\alpha,\theta)}$;
	\item the almost Mathieu operator  $H_{\lambda,\alpha,\theta}$  has purely singular continuous spectrum if  $ 1 <|\lambda| < e^{\beta(\alpha)+\delta(\alpha,\theta)}$.
\end{itemize}

Conjecture 1 
 is referred to as the second spectral transition line conjecture. 

Significant progress in establishing the second spectral transition line  has been made in the past 15 years ~\cite{lyjfa,liuetds20,lyjst,ayz,aj09,jl18,jl19,jk16,aj10,lyjfg,ajz18,geyou1,geyouzhao}. Here is a summary.
	
	{\bf Second spectral transition line}
	\begin{itemize}
		\item[1.] When $  |\lambda|>e^{\beta(\alpha)}$ and $\delta(\alpha,\theta)=0$,  $H_{\lambda,\alpha,\theta} $ has   Anderson localization 
		~\cite{jl18}.
		\item[2.]  When $ |\lambda|>e^{\delta(\alpha,\theta)}$ and $\beta(\alpha)=0$,  $H_{\lambda,\alpha,\theta} $ has   Anderson localization 
		~\cite{jl19}.
		\item[3.] When $|\lambda|>e^{2\beta(\alpha)}$,  
		$H_{\lambda,\alpha,\theta}$
		satisfies Anderson localization  for completely  resonant phase $\theta$, namely,   $2\theta\in \alpha \mathbb{Z}+\mathbb{Z}$  ~\cite{l21}.  
		\item[4.] When $1< |\lambda|<e^{\beta(\alpha)}$,   $H_{\lambda,\alpha,\theta} $ has purely singular continuous spectrum ~\cite{ayz}.
		
		\item[5.]  When $1< |\lambda|<e^{\delta(\alpha,\theta)}$,   $H_{\lambda,\alpha,\theta} $ has purely singular continuous spectrum ~\cite{jl19}.
		
	\end{itemize}
	It is worthwhile to mention that 
	Part 3 solves   a conjecture of 	Avila and Jitomirskaya ~\cite{aj09}. 
	It is easy to see that 
	for completely  resonant phases $\theta$,  namely $2\theta\in\alpha\Z+\Z$,  $\delta(\alpha,\theta)=\beta(\alpha)$,  so Part 3 is a particular case of Conjecture 1.  
	All previous results give positive answers to Conjecture 1.

In this paper,
we discover that not just strength of frequency and phase resonances  plays a role in the supercritical regime,  
distributions of  
 resonances are also crucial to the spectral types.
	In particular, we  disprove Conjecture 1. 
	
	We postpone our main theorem to Section 2 and list  some  corollaries first.
	 
	We denote the  continued fraction expansion  of $\alpha$ by
	\begin{equation}\label{g-1}
		\alpha=[a_0,a_1,a_2,\cdots]=a_0+\frac{1}{a_1+\frac{1}{a_2+\frac{1}{a_3+\frac{1}{\ddots}}}}.
	\end{equation}

	
	\begin{corollary}\label{main1}
		Assume that $\alpha$  satisfies  $0<\beta(\alpha)<\infty$  and $\liminf_{n\to\infty} a_n\geq 10^{20}$.  
		Then there exists  $\theta$  with $0<\delta(\alpha,\theta)<\infty$ such that 
		\begin{itemize}
			\item[1.] for $|\lambda|>e^{ \delta(\alpha,\theta)}$, 	$H_{\lambda,\alpha,\theta}$ has Anderson localization;
			\item[2.]  for $1\leq  |\lambda| <e^{ \delta(\alpha,\theta)}$, 	$H_{\lambda,\alpha,\theta}$ has  purely singular continuous spectrum;
			\item[3.] for $ |\lambda| <1$, 	$H_{\lambda,\alpha,\theta}$ has  purely absolutely continuous spectrum.
		\end{itemize}
		
	\end{corollary}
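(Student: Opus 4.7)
The plan is to derive Corollary \ref{main1} from the main theorem of Section 2 (which the excerpt defers) together with the already-catalogued results on the subcritical, critical, and partially-resolved supercritical regimes.

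Parts (2) and (3) should follow almost immediately from existing theorems. Part (3), purely absolutely continuous spectrum for $|\lambda|<1$, is Avila's theorem on the subcritical almost Mathieu operator, which holds for every $(\alpha,\theta)\in\R^2$. For Part (2), the range $1<|\lambda|<e^{\delta(\alpha,\theta)}$ is exactly item 5 in the list of known cases (Jitomirskaya--Liu), and the endpoint $|\lambda|=1$ is the critical case handled by Avila--Jitomirskaya--Marx (cited as \cite{j19,ajm}). So the content of the corollary is really Part (1): pinning down Anderson localization for all $|\lambda|>e^{\delta(\alpha,\theta)}$ under the arithmetic constraints on $\alpha$, even though $\beta(\alpha)>0$ is allowed to be arbitrarily large. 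Note this is strictly beyond Conjecture 1, because once $0<\delta(\alpha,\theta)<\beta(\alpha)+\delta(\alpha,\theta)$ one obtains a nonempty window $e^{\delta(\alpha,\theta)}<|\lambda|<e^{\beta(\alpha)+\delta(\alpha,\theta)}$ in which the conjecture predicts singular continuous spectrum while the corollary asserts localization.

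The key is to construct a phase $\theta$ whose phase-resonance moments are placed in a distributionally favorable way relative to the frequency-resonance moments. Let $p_n/q_n$ denote the continued fraction convergents of $\alpha$. The hypothesis $\liminf a_n\ge 10^{20}$ gives $q_{n+1}\ge 10^{20}q_n$, which leaves a huge gap $[q_n, q_{n+1}]$ in which to plant phase resonances. Along a sparse subsequence $n_k$, I would pick integers $\ell_k$ with $q_{n_k}\ll \ell_k\ll q_{n_k+1}$ and define
\[
2\theta \;=\; \sum_{k\ge 1}\varepsilon_k\,\|\ell_k\alpha\|_{\R/\Z}
\]
with signs $\varepsilon_k\in\{\pm 1\}$ and growth rates of $\ell_k$ calibrated so that $\|2\theta+\ell_k\alpha\|_{\R/\Z}\asymp e^{-\delta\ell_k}$ for a prescribed $\delta\in(0,\infty)$, while no other integer $m$ can produce a better resonance than this sequence. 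In particular, all phase resonances of $\theta$ are separated from the frequency-resonant denominators $q_n$.

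Once $\theta$ is in place, the proof of Part (1) reduces to checking that the distributional hypothesis of the Section 2 main theorem is met: namely that the phase-resonance scale $\ell_k$ and the frequency-resonance scale $q_n$ never interleave in the forbidden way. This is exactly guaranteed by the gap property $q_{n+1}/q_n\ge 10^{20}$ together with the placement of the $\ell_k$. The main obstacle is the delicate two-sided control of $\delta(\alpha,\theta)$: the $\ell_k$ must be chosen far enough from the $q_{n_k}$ to prevent spurious inherited resonances (keeping $\delta<\infty$), yet the sequence $\|2\theta+\ell_k\alpha\|_{\R/\Z}$ must dominate every other approximation $\|2\theta+m\alpha\|_{\R/\Z}$ (so that $\delta>0$ is exactly realized along the $\ell_k$). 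The large $\liminf a_n$ provides the slack needed to accomplish both simultaneously, and the resulting $\theta$ is then fed into the Section 2 main theorem to complete Part (1).
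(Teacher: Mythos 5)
Your high-level decomposition is right: Parts (2) and (3) reduce to Avila's subcritical a.c.\ theorem, the critical singular-continuous result, and the Jitomirskaya--Liu singular continuity for $1<|\lambda|<e^{\delta(\alpha,\theta)}$; Part (1) is the real content and must come from a construction of $\theta$ that decouples phase-resonance sites from frequency-resonance sites using the gaps $q_{n+1}/q_n\ge 10^{20}$. That is indeed the paper's strategy, and once $\theta$ is built the corollary is a one-liner: invoke Theorem \ref{main2} with $\eta=10^{-2}$ (so that $10^{18}\eta^{-1}=10^{20}$ matches the hypothesis) and Lemma \ref{le0}, which gives $\delta(\alpha,\theta)\asymp\beta(\alpha)/\eta\in(0,\infty)$.

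However, the specific construction you propose does not work, and the part you flag as ``the main obstacle'' is left entirely unargued. The formula $2\theta=\sum_{k}\varepsilon_k\|\ell_k\alpha\|_{\R/\Z}$ does not place phase resonances at the $\ell_k$: to have $\|2\theta+\ell_k\alpha\|_{\R/\Z}$ small for every $k$, one needs $(-\ell_k\alpha)\bmod 1$ to form a Cauchy sequence converging to $2\theta$, which is controlled by $\|(\ell_{k+1}-\ell_k)\alpha\|_{\R/\Z}$ being small, not by $\|\ell_k\alpha\|_{\R/\Z}$ being small. The paper arranges this by taking $k_{j+1}-k_j=\lfloor\eta q_{2j+1}/q_{2j}\rfloor q_{2j}$, a multiple of $q_{2j}$, so that $\|(k_{j+1}-k_j)\alpha\|_{\R/\Z}\lesssim\eta/q_{2j}$, and defines $\theta$ as the unique point of $\bigcap_j I_j$ with $I_j$ as in \eqref{g0}. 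In that construction $k_j\approx\eta q_{2j-1}$ is not a multiple of any $q_m$, so $\|k_j\alpha\|_{\R/\Z}$ is of order one and your series would not even converge, let alone track the intended resonances. The second gap is the two-sided estimate $0<\delta(\alpha,\theta)<\infty$: you correctly identify this as delicate but give no argument, whereas the paper's Proposition \ref{prop} is exactly the tool that rules out competing approximations and Lemma \ref{le0} converts it into $(1-10^{-14})\beta/\eta\le\delta(\alpha,\theta)\le(1+10^{-14})\beta/\eta$. Without those estimates the conclusion $0<\delta(\alpha,\theta)<\infty$ in the corollary is not established, and Theorem \ref{main2} (a statement about the specific $\theta$ of \eqref{g5}/\eqref{g5case2}) cannot be applied to an unspecified alternative $\theta$.
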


\begin{corollary}\label{cor1}
		For any $0<\mu<\infty$, 
	there exist $\alpha$ with $\beta(\alpha)=\mu$ and  $\theta$  with $0<\delta(\alpha,\theta)<\infty$  such that 
	\begin{itemize}
		\item[1.] for $|\lambda|>e^{ \delta(\alpha,\theta)}$, 	$H_{\lambda,\alpha,\theta}$ has Anderson localization;
		\item[2.]  for $1\leq  |\lambda| <e^{ \delta(\alpha,\theta)}$, 	$H_{\lambda,\alpha,\theta}$ has  purely singular continuous spectrum;
		\item[3.] for $ |\lambda| <1$, 	$H_{\lambda,\alpha,\theta}$ has  purely absolutely continuous spectrum.
	\end{itemize}
\end{corollary}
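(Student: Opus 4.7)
The plan is to reduce Corollary \ref{cor1} to Corollary \ref{main1}, whose three spectral conclusions already match those demanded here. Thus the only task is to produce, for each prescribed $\mu\in(0,\infty)$, an irrational $\alpha$ that \emph{simultaneously} satisfies $\beta(\alpha)=\mu$ and $\liminf_{n\to\infty} a_n\geq 10^{20}$. Once such an $\alpha$ is in hand, Corollary \ref{main1} delivers the phase $\theta$ and all three spectral statements.

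To manufacture $\alpha$, I would work directly through its continued fraction expansion. The starting point is the standard identity
\[
\beta(\alpha) \;=\; \limsup_{n\to\infty}\frac{\ln q_{n+1}}{q_n} \;=\; \limsup_{n\to\infty}\frac{\ln a_{n+1}}{q_n},
\]
which follows from the best-approximation estimate $\|q_n\alpha\|_{\R/\Z}\asymp 1/q_{n+1}$ together with $q_{n+1}=a_{n+1}q_n+q_{n-1}$; the lower-order term of size $(\ln q_n)/q_n$ vanishes in the limit so long as $q_n\to\infty$, which is automatic once $a_n\geq 10^{20}$. Pick a sparse sequence $n_k\uparrow\infty$, for instance $n_k=2^k$, and define inductively
\[
a_n \;=\; \begin{cases} \lfloor e^{\mu q_{n-1}}\rfloor, & n=n_k+1 \text{ for some } k, \\ 10^{20}, & \text{otherwise.}\end{cases}
\]
This specification is well posed since $q_{n-1}$ is determined by $a_0,\ldots,a_{n-1}$. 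Since every $a_n\geq 10^{20}$, the hypothesis on the partial quotients holds and $q_n\to\infty$ exponentially. Along $\{n_k\}$ one has $\ln a_{n_k+1}/q_{n_k}\to\mu$, while at every other index $\ln a_{n+1}/q_n = (20\ln 10)/q_n\to 0$, so the limsup equals $\mu$ exactly.

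Having produced $\alpha$ with $\beta(\alpha)=\mu$ and $\liminf_n a_n\geq 10^{20}$, I then apply Corollary \ref{main1} to this $\alpha$ to obtain a phase $\theta$ with $0<\delta(\alpha,\theta)<\infty$ realizing the three spectral conclusions. No serious obstacle arises in this reduction; the entire substantive content lies upstream, in Corollary \ref{main1}, whose proof must explicitly construct a resonant $\theta$ whose denominators are tuned against the distribution of frequency resonances of $\alpha$.
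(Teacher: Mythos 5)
Your proposal is correct and follows the same route as the paper: reduce to Corollary~\ref{main1} by exhibiting $\alpha$ with $\beta(\alpha)=\mu$ and $\liminf_n a_n\geq 10^{20}$, built by hand through the continued fraction expansion. The only difference is cosmetic: the paper sets $a_{n+1}=\lfloor e^{\mu q_n}\rfloor$ at every index (forcing $a_n\to\infty$ and $\ln q_{n+1}/q_n\to\mu$), whereas you place the exponential jump only along a sparse subsequence $\{n_k\}$ and fill the remaining slots with the constant $10^{20}$, which gives $\liminf a_n=10^{20}$ exactly and $\limsup \ln q_{n+1}/q_n=\mu$. Both constructions satisfy the hypotheses of Corollary~\ref{main1} and the substantive content lies upstream, as you note.
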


	\begin{remark}
		\begin{itemize}\label{re}
				\item  
			The second spectral transition line for    $(\alpha,\theta)$ in     Corollaries \ref{main1} and \ref{cor1}   is $|\lambda|=e^{\delta(\alpha,\theta)}$ differing from $|\lambda|=e^{\beta(\alpha)+\delta(\alpha,\theta)}$ , which disproves  Conjecture 1.
				\item 
				Parts 2 and 3  in  Corollaries \ref{main1} and \ref{cor1} are known  for all possible $\alpha$ and $\theta$ ~\cite{jl19,j19,avi08}. They are  included here for completeness.
				
			\item  Part 1 in Corollaries \ref{main1} and \ref{cor1}   follows from our main Theorem \ref{main2} below and some simple facts about continued fraction expansions.

			
		\end{itemize}
		
	\end{remark}
	
The proof of  our main results is constructive. 
 As we mentioned earlier,  our  key observation is that  the  distribution of resonances   plays a role in determining  the spectral type.  The main idea behind our construction is that when    locations of phase resonances and  frequency resonances are  far away, the Lyapunov exponent with a smaller magnitude  (predicted by Conjecture 1) is sufficient to beat both frequency and phase resonances.  Our analysis in the present paper gives us the indication that Conjecture 1 is only true  for the two extreme cases 
 ($\beta(\alpha)=0$ or $\delta(\alpha,\theta)=0$). 
 So,
 we conjecture that 
 
{\bf Conjecture 2}  

For any $\alpha$ with $0<\beta(\alpha)<\infty$ and $0<\kappa<\infty$,  there exists $\theta$  with 
 $\delta(\alpha,\theta)=\kappa$ such that the following is not true. 
$H_{\lambda,\alpha,\theta}$ has Anderson localization  for $   |\lambda| > e^{\kappa+\beta(\alpha)}$  and  has purely singular continuous  spectrum for  $1<|\lambda| <e^{\kappa+\beta(\alpha)}$.

	We list two  simpler versions of Conjecture 2 here.
	
	{\bf Conjecture 3}  
	
For any $\alpha$ with $0<\beta(\alpha)<\infty$,  there exists $\theta$  with 
$0<\delta(\alpha,\theta)<\infty$ such that the following is not true. 
$H_{\lambda,\alpha,\theta}$ has Anderson localization  for $   |\lambda| > e^{\delta(\alpha,\theta)+\beta(\alpha)}$  and  has purely singular continuous  spectrum for  $1<|\lambda| <e^{\delta(\alpha,\theta)+\beta(\alpha)}$.
	
	{\bf Conjecture 4}  
	
	For any  $0<\mu<\infty$ and $0<\kappa<\infty$,  there exist  $\alpha$ and $\theta$  with 
	$\beta(\alpha)=\mu$ and $\delta(\alpha,\theta)=\kappa$ such that the following is not true. 
	$H_{\lambda,\alpha,\theta}$ has Anderson localization  for $   |\lambda| > e^{\kappa+\mu}$  and  has purely singular continuous  spectrum for  $1<|\lambda| <e^{\kappa+\mu}$.
	
	As a corollary of Theorem \ref{main2} below,  we 
	prove Conjecture 4 in the regime $  \kappa \in[ 100 \mu,\infty)$ and $0<\mu<\infty$, namely,  
	
	\begin{corollary}\label{cor2}
For any	$0<\mu<\infty$ and any $\kappa \in[100 \mu,\infty)$,  
		there exist $\alpha$ and $\theta$ with $\beta(\alpha)=\mu$ and    $\delta(\alpha,\theta)=\kappa$  such that 
		\begin{itemize}
			\item[1.] for $|\lambda|>e^{ \delta(\alpha,\theta)}$, 	$H_{\lambda,\alpha,\theta}$ has Anderson localization;
			\item[2.]  for $1\leq  |\lambda| <e^{ \delta(\alpha,\theta)}$, 	$H_{\lambda,\alpha,\theta}$ has  purely singular continuous spectrum;
			\item[3.] for $ |\lambda| <1$, 	$H_{\lambda,\alpha,\theta}$ has  purely absolutely continuous spectrum.
		\end{itemize}
	\end{corollary}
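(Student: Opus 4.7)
Parts 2 and 3 of Corollary \ref{cor2} carry no construction content: by \cite{jl19} the range $1\leq|\lambda|<e^{\delta(\alpha,\theta)}$ yields purely singular continuous spectrum for every $(\alpha,\theta)$ with $0<\delta(\alpha,\theta)<\infty$, and by \cite{avi08} the subcritical regime $|\lambda|<1$ always produces purely absolutely continuous spectrum. So the only real content is Part 1, which, in light of the remark following Corollary \ref{cor1}, should reduce to the main Theorem \ref{main2} once one has built $(\alpha,\theta)$ with $\beta(\alpha)=\mu$, $\delta(\alpha,\theta)=\kappa$, and with frequency and phase resonances distributed in the separated way Theorem \ref{main2} demands.

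I would build $\alpha$ and $\theta$ simultaneously along a single telescoping family of scales. For $\alpha=[a_0,a_1,a_2,\ldots]$ with convergent denominators $q_n$, recall $\beta(\alpha)=\limsup_n(\ln q_{n+1})/q_n$. Choose a sparse subsequence $\{n_j\}$ and set $a_{n_j+1}$ so that $q_{n_j+1}\approx e^{\mu q_{n_j}}$ while $a_n$ is uniformly bounded off the subsequence. Taking $n_{j+1}$ to grow fast enough relative to $q_{n_j}$ forces $\beta(\alpha)=\mu$ (not merely $\leq\mu$) and localises the frequency resonances essentially at the integer multiples of the $q_{n_j}$.

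Next I would define $\theta$ so that its phase resonances sit on a disjoint sequence $\{m_\ell\}$ inserted into the wide gaps between consecutive frequency-resonant scales. The hypothesis $\kappa\geq 100\mu$ is what creates room for this: in each dyadic window $q_{n_j}^{\,10}\ll m\ll q_{n_j+1}^{\,1/10}$ one can place an $m_\ell$ and prescribe, via a rapidly convergent series $2\theta=-\sum_\ell m_\ell\alpha+\varepsilon_\ell\pmod 1$ with $|\varepsilon_\ell|\asymp e^{-\kappa m_\ell}$, that $\|2\theta+m_\ell\alpha\|\asymp e^{-\kappa m_\ell}$. A standard three-distance/best-approximation argument, together with the fact that the $m_\ell$ are kept well away from $q_{n_j}\Z$, shows that no other $k$ produces a better exponent, so $\delta(\alpha,\theta)=\kappa$ exactly, and every phase resonance is separated from every frequency resonance by the buffer dictated by Theorem \ref{main2}.

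Finally I would apply Theorem \ref{main2} to this $(\alpha,\theta)$ to deduce Anderson localization for $|\lambda|>e^{\kappa}=e^{\delta(\alpha,\theta)}$, which is Part 1. The main technical obstacle is the simultaneous bookkeeping: one must prescribe both $\beta(\alpha)=\mu$ and $\delta(\alpha,\theta)=\kappa$ as equalities (not just inequalities), keep the frequency-resonant and phase-resonant scales in disjoint blocks, and verify that the quantitative separation matches the precise hypothesis of Theorem \ref{main2}. The factor $100$ in $\kappa\geq 100\mu$ is not intrinsic to the idea but just the comfortable margin that makes the two resonance families non-interfering; closing it toward $\kappa>\mu$ would require a finer multi-scale accounting and is exactly what Conjectures 3 and 4 ask for.
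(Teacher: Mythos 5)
Your plan correctly reduces Parts 2 and 3 to the known results of \cite{jl19} and \cite{avi08} and correctly identifies $\eta=\mu/\kappa\leq 10^{-2}$ as the origin of the factor $100$, but the core of your argument for Part 1 has a gap: you cannot invoke Theorem \ref{main2} for the $(\alpha,\theta)$ you propose to build. Theorem \ref{main2} is not a general statement of the form ``separated resonances imply localization''; it is stated \emph{only} for the $\theta$ produced by the nested-interval construction \eqref{g5} (or \eqref{g5case2}), and it additionally requires $\liminf_n a_n\geq 10^{18}\eta^{-1}$. Your $\alpha$, built so that $a_n$ is ``uniformly bounded off a sparse subsequence'' $\{n_j\}$, fails this hypothesis outright, and your $\theta$, built by a series $2\theta=-\sum_\ell m_\ell\alpha+\varepsilon_\ell$ with $m_\ell$ placed in windows $q_{n_j}^{10}\ll m_\ell\ll q_{n_j+1}^{1/10}$, is not the $\theta$ of \eqref{g5}. (That series also does not converge as written.) To make your route work you would have to re-prove the entire localization machinery of Sections 5--7 for your different pair $(\alpha,\theta)$, which is precisely the technical content of the paper, not a bookkeeping remark.

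The paper's actual proof is much shorter and uses a different geometry of resonances than you envisage. It reuses the $\alpha$ already constructed in Corollary \ref{cor1}, which satisfies $a_{n+1}=\lfloor e^{\mu q_n}\rfloor$ for \emph{every} $n$, hence $\beta(\alpha)=\mu$ and $\liminf_n a_n=\infty$. It then runs the construction \eqref{g5} with $\eta=\mu/\kappa$ (allowed because $\kappa\geq 100\mu$ gives $\eta\leq 10^{-2}$), obtaining a $\theta$ whose phase resonances sit at $k_j\approx\eta q_{2j-1}$, i.e.\ at a fixed \emph{fraction} of the frequency scale $q_{2j-1}$ rather than polynomially far beyond it. Lemma \ref{le01} then gives the exact equality $\delta(\alpha,\theta)=\beta/\eta=\kappa$, and Part 1 follows by applying Theorem \ref{main2} verbatim. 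The key conceptual point you have not internalized is that Theorem \ref{main2} and the $\theta$-construction of Section 2 are a matched pair: Corollary \ref{cor2} is deduced by specializing those, not by designing a new $(\alpha,\theta)$ and hoping the theorem covers it.
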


	\section{Constructions  of  $\theta$}
	
	
	Without loss of generality, assume 
	$\alpha\in [0,1]\backslash \Q$.  Recall \eqref{g-1} and 
for $n\geq 1$, let 
	\begin{equation}\label{g101}
		\frac{p_n}{q_n}=[a_1,a_2,\cdots,a_n]=\frac{1}{a_1+\frac{1}{a_2+\frac{1}{a_3+\frac{1}{\ddots+\frac{1}{a_n}}}}},
	\end{equation}
	where $(p_n,q_n)=1$. 
	By setting $q_0=1$ and $q_{-1}=0$,  one has that  for any $n\geq 1$,
	\begin{equation}\label{g1}
		q_{n}=a_n q_{n-1}+q_{n-2}.
	\end{equation}

	
	By  basic facts  of the continued fraction expansion,  we have that  	for any $1\leq k<q_{n+1}$,
	\begin{equation}\label{GDC1}
		||k\alpha||_{\R/\Z}\geq  ||q_n\alpha||_{\R/\Z},
	\end{equation}
	\begin{equation}\label{GDC2}
		\frac{1}{q_n+q_{n+1}}\leq ||q_n\alpha||_{\R/\Z}=|q_n\alpha -p_n|\leq\frac{1}{q_{n+1}}.
	\end{equation}
	and for even $n$,
	\begin{equation}\label{GDC3}
		\frac{p_{n}}{q_{n}}< \alpha< \frac{p_{n+1}}{q_{n+1}}.
	\end{equation}
	
	By \eqref{gresf} and \eqref{g1}-\eqref{GDC2}, one has that 
	\begin{equation}\label{g3}
		\beta(\alpha)=\limsup_{n\to\infty}\frac{\ln q_{n+1}}{q_n}. 
	\end{equation}
	We are in a position to construct $\theta$.

	Fix any $\eta$ with $0<\eta\leq  10^{-2}$.
	Assume that $\alpha$ satisfies 
	\begin{equation}\label{gnew1}
		\liminf_{n\to\infty} a_n\geq 10^{18}\eta^{-1}. 
	\end{equation}
	
	By the assumption \eqref{gnew1}, there exists  large enough
	$ \tilde{j}_0$  such that  for any $j\geq \tilde{j}_0$,
	\begin{equation}\label{gj22}
 \eta q_{j+1}\geq 10^{17}q_{j}.
	\end{equation}
	

	Let $k_{0}$  be a positive integer  (the existence is guaranteed by the ergodic theory) so that  
	\begin{equation}\label{gj21}
(-k_{0}\alpha)\mod\Z \in \left[\frac{1}{10},\frac{1}{5}\right].
	\end{equation}
	
	
{\bf Case 1:} $\limsup_{j\to \infty}\frac{\ln q_{2j}}{q_{2j-1}}=\beta(\alpha)$.

Let $k_{ \tilde{j}_0}=k_0$.
For  any $j\geq \tilde{j}_0$, define
\begin{equation}\label{g4}
	k_{j+1}=k_{j}+\left\lfloor \eta \frac{q_{2j+1} }{q_{2j}} \right\rfloor  q_{2j},
\end{equation}
where  $\lfloor x \rfloor$ is the largest integer that is less than or equal to $x$.
For  any  $j\geq  \tilde{j}_0$, let 
\begin{equation}\label{g0}
	I_j=\left \{\theta\in [0,1/2]:-	 \frac{10\eta}{q_{2j}}\leq 2\theta- ((-k_j \alpha)\mod\Z) \leq   -\frac{\eta}{10q_{2j}}
	\right\}.
\end{equation}

By \eqref{GDC2}, \eqref{GDC3}, \eqref{gj22},  \eqref{gj21},  \eqref{g4} and inductions, one has that  for  any $j\geq  \tilde{j}_0$,
\begin{align*}
(	(-k_{j+1} \alpha)\mod \Z)-(	(-k_{j} \alpha)\mod \Z)&=-\left\lfloor \eta \frac{q_{2j+1} }{q_{2j}} \right\rfloor (q_{2j} \alpha -p_{2j})\\
	&\geq- \eta \frac{q_{2j+1} }{q_{2j}} |q_{2j} \alpha -p_{2j}|\\
	&\geq  -\frac{\eta }{q_{2j}},
\end{align*}
\begin{align*}
(	(-k_{j+1} \alpha)\mod \Z)-(	(-k_{j} \alpha)\mod \Z)
	&\leq - \eta \frac{q_{2j+1} }{2q_{2j}}|q_{2j} \alpha -p_{2j}| \\
	&\leq- \frac{\eta }{4q_{2j}},
\end{align*}
and
\begin{equation*}
(-k_j \alpha)\mod\Z \in \left[\frac{1}{20},\frac{1}{5}\right].
\end{equation*}
Therefore, we have that    $I_j$ (a closed interval),  $j\geq  \tilde{j}_0$,  is monotone, namely $I_{j+1}\subset I_j$. 
This implies  there exists  $\theta$   such that 
\begin{equation}\label{g5}
	\theta = \bigcap_{j= \tilde{j}_0}^{\infty} I_j.
\end{equation}

{\bf Case 2:} $\limsup_{j\to \infty}\frac{\ln q_{2j+1}}{q_{2j}}=\beta(\alpha)$.

For  any $j\geq \tilde{j}_0$, define
\begin{equation}\label{g4case2}
	k_{j+1}=k_{j}+\left\lfloor \eta \frac{q_{2j} }{q_{2j-1}} \right\rfloor  q_{2j-1} .
\end{equation}

For  any  $j\geq  \tilde{j}_0$, let 
\begin{equation}\label{g0case2}
	I_j=\left \{\theta\in [0,1/2]: \frac{\eta}{10q_{2j-1}} \leq 2\theta- ((-k_j \alpha) \mod \Z) \leq    \frac{10\eta}{q_{2j-1}}
	\right\}.
\end{equation}

By \eqref{GDC2}, \eqref{GDC3}, \eqref{gj22},  \eqref{gj21},  \eqref{g4case2} and inductions, one has that  for  any $j\geq  \tilde{j}_0$,
\begin{align*}
(	(-k_{j+1} \alpha)\mod \Z)-(	(-k_{j} \alpha)\mod \Z)&=-\left\lfloor \eta \frac{q_{2j} }{q_{2j-1}} \right\rfloor ( q_{2j-1} \alpha -p_{2j-1} )\\
	&\leq\eta \frac{q_{2j} }{q_{2j-1}}  | q_{2j-1} \alpha -p_{2j-1} | \\
	&\leq \frac{\eta }{q_{2j-1}},
\end{align*}
 
\begin{align*}
(	(-k_{j+1} \alpha)\mod \Z)-(	(-k_{j} \alpha)\mod \Z)
	&\geq  \eta \frac{q_{2j} }{2q_{2j-1}} | q_{2j-1} \alpha -p_{2j-1} | \\
	&\geq \frac{\eta }{4q_{2j-1}},
\end{align*}
and
\begin{equation*}
(-k_j \alpha)\mod\Z \in \left[\frac{1}{10}, \frac{2}{5} \right].
\end{equation*}
Therefore, we have that    $I_j$ (a closed interval),  $j\geq  \tilde{j}_0$, is monotone, namely $I_{j+1}\subset I_j$. This implies  there exists  $\theta$   such that 
\begin{equation}\label{g5case2}
	\theta = \bigcap_{j= \tilde{j}_0}^{\infty} I_j.
\end{equation}

\begin{theorem}\label{main2}
		Fix any $\eta$ with $0<\eta\leq  10^{-2}$.
		Assume that $\alpha$  satisfies  $0<\beta(\alpha)<\infty$  and $\liminf_{n\to\infty} a_n\geq 10^{18}\eta^{-1}$.  
	Let  $\theta$ be constructed  by \eqref{g5} or \eqref{g5case2}. Then   
	for any $|\lambda|>e^{ \delta(\alpha,\theta)}$, 	$H_{\lambda,\alpha,\theta}$ has Anderson localization.

\end{theorem}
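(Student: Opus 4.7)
The plan is to follow the block-resolvent/Green function strategy for Anderson localization of the supercritical almost Mathieu operator developed by Jitomirskaya and Jitomirskaya--Liu~\cite{jl18,jl19}, adapted to exploit the spatial separation between phase and frequency resonances that the construction of $\theta$ builds in. I describe the argument for Case~1; Case~2 is analogous with indices shifted by one.

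Fix $E\in\sigma(H_{\lambda,\alpha,\theta})$ and a polynomially bounded generalized eigenfunction $u$ of $Hu=Eu$, normalized by $u(0)^2+u(-1)^2=1$. It suffices to prove $|u(n)|\leq e^{-c|n|}$ for some fixed $c\in(0,\ln|\lambda|-\delta(\alpha,\theta))$ and all $|n|$ large. A direct calculation from the construction yields $\|2\theta+k_j\alpha\|_{\R/\Z}\asymp \eta/q_{2j}$, $k_j\asymp \eta q_{2j-1}$, and $k_{j+1}-k_j\asymp \eta q_{2j+1}$; moreover
\[
\delta(\alpha,\theta) \;\asymp\; \limsup_{j\to\infty}\frac{\ln q_{2j}}{\eta q_{2j-1}},
\]
and this $\limsup$ is realized precisely on the isolated set $\{k_j\}$, so all other shifts $\theta+k\alpha$ are controllably non-resonant. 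The proof then splits according to whether $n$ is far from or close to this set.

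For $n$ with $\mathrm{dist}(n,\{k_j\})\geq \tfrac{1}{10}|k_j-k_{j-1}|$ (with $j$ determined by the scale of $|n|$) one selects a window $[y_1,y_2]\ni n$ contained strictly between two consecutive $k_j$'s, of length comparable to an appropriate continued-fraction denominator $q_m$. Since no phase resonance lies in the window, the arguments of~\cite{jl18} combined with Avila's upper bound $\|A_N(\cdot,E)\|\leq e^{(\ln|\lambda|+\varepsilon)N}$ and Herman--Jitomirskaya type lower bounds on determinants away from phase-resonant shifts establish $(\ln|\lambda|-\varepsilon)$-regularity of $[y_1,y_2]$. The block-resolvent identity
\[
u(n)=-G_{[y_1,y_2]}(n,y_1)u(y_1-1)-G_{[y_1,y_2]}(n,y_2)u(y_2+1),
\]
together with the polynomial bound on $u$, then yields exponential decay at $n$.

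The delicate resonant case, which I expect to be the principal technical obstacle, is $n$ within $\tfrac{1}{10}|k_j-k_{j-1}|$ of some $k_j$. Any window containing $n$ and excluding $k_{j\pm 1}$ necessarily contains the single resonance at $k_j$, contributing a factor of $1/\|2\theta+k_j\alpha\|\asymp q_{2j}/\eta$ to the Green function. To dominate this by transfer-matrix growth one needs $|\lambda|^{\min(|n-y_1|,|n-y_2|)}\gg q_{2j}/\eta$; the largest admissible $\min(|n-y_1|,|n-y_2|)$ compatible with enclosing only one resonance is $\asymp |k_j-k_{j-1}|\asymp \eta q_{2j-1}$, so the requirement reduces to
\[
\ln|\lambda|\cdot \eta q_{2j-1}\;\gtrsim\; \ln q_{2j},
\]
which by the formula for $\delta$ above is equivalent to $\ln|\lambda|>\delta(\alpha,\theta)$---precisely the standing hypothesis, which therefore supplies exactly the margin needed. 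Essential to this ``single-resonance'' accounting is the arithmetic of the construction: the fact that $k_{j+1}-k_j$ is an integer multiple of $q_{2j}$ ensures that the $q_{2j}$-periodic frequency-resonance structure inside the window contributes no compounded deterioration to the Green function. Rigorously verifying this factorization, and then iterating the resulting Green-function bound over a dyadic hierarchy of scales to recover exponential decay at every large $n$, is the technical heart of the proof.
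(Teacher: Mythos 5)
Your high-level strategy — block resolvent expansion exploiting the spatial separation between the phase resonance near $k_j\asymp\eta q_{2j-1}$ and the frequency resonances at multiples of $q_{2j-1}$, with the threshold $L>\delta$ emerging from the constraint $L\cdot\eta q_{2j-1}\gtrsim\ln q_{2j}$ — is indeed the strategy of the paper, and your asymptotics for $k_j$, $k_{j+1}-k_j$, $\|2\theta+k_j\alpha\|$ and the formula for $\delta$ are all correct. But your proposed dichotomy is too coarse in a way that would make the ``non-resonant'' half of the argument collapse. You assert that because the $\limsup$ defining $\delta$ is realized on the set $\{k_j\}$, all other shifts are controllably non-resonant and, in particular, a window taken between two consecutive $k_j$'s contains no phase resonance. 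This is false. For any small $\ell$, the site $\ell q_{2j-1}+k_j$ satisfies $\|2\theta+(\ell q_{2j-1}+k_j)\alpha\|_{\R/\Z}\asymp 1/q_{2j}$, the same order as at $k_j$ up to a factor of $\eta$; likewise $\ell q_{2j}+k_j$ is strongly phase resonant at the next scale. These sites lie at distance $\gg\frac{1}{10}|k_j-k_{j-1}|$ from $\{k_j\}$, so under your criterion they would be handled by the non-resonant argument, where the determinant lower bounds you cite simply do not hold (the relevant $P_{[y_1,y_2]}$ picks up the small factor $\|2\theta+(\ell q_n+k_j)\alpha\|$). The paper therefore works scale by scale and isolates three families of resonant sites per scale $q_n$ — the frequency-resonant $\ell q_n$, and for odd $n$ also the phase-resonant $\ell q_n+k_j$ — proving a separate recursive inequality for each (Theorems~\ref{main3}, \ref{main4}, \ref{main5}) and a non-resonant estimate only for $y$ far from \emph{both} families (Lemma~\ref{le2}). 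The localization then follows by iterating this coupled system, not from a single-window argument near $k_j$.

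Two smaller points. First, your ``single-resonance accounting'' near $k_j$ needs the two-sided contradiction with $\phi(0)=1$: the window cannot be slid to enclose $0$, and the paper excludes the alternative ($j_0\in I_1$) precisely by showing it would force $|\phi(0)|\leq 1/2$. Second, your claim that $k_{j+1}-k_j\in q_{2j}\Z$ ``ensures the $q_{2j}$-periodic frequency-resonance structure contributes no compounded deterioration'' does not correspond to any estimate actually used; that arithmetic serves only to make the nested intervals $I_j$ well-defined so that $\theta$ exists. The control of frequency resonances inside the windows comes instead from the explicit small-divisor bounds of Proposition~\ref{prop} together with the observation that $L>\delta\geq(1-10^{-14})\beta/\eta\geq 100\beta$, so that frequency resonances are beaten with a large margin; this needs to be checked (it is the content of Theorems~\ref{main4} and \ref{main5}), not asserted.
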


%
%

In   the following sections, we prove Theorem \ref{main2}.  In order to avoid  repetitions, we only prove the Case 1:  $\limsup_{j\to \infty}\frac{\ln q_{2j}}{q_{2j-1}}=\beta(\alpha)$. The proof  of Case 2 is similar. 
We list the definitions and standard facts  in Section \ref{S3}. 
In Section \ref{S4}, we provide several technical lemmas.
Sections \ref{S5}  and \ref{S6} are devoted to treating non-resonant and  resonant sites respectively.  In Section \ref{Sclaim}, we prove corollaries and three claims which are used in Sections  \ref{S5}  and \ref{S6}.  Proofs of those claims are quite standard if readers are familiar with the small denominator arguments related to almost Mathieu operators.

 \section{Basics}\label{S3}
All basics in this section are taken from ~\cite{liuetds20}. We refer readers to ~\cite{liuetds20} and references therein for details.
Let $H$ be an operator on $\ell^2(\Z)$.
We say $\phi$ is a generalized eigenfunction corresponding to the generalized eigenvalue $E$ if 
\begin{equation}\label{g68}
	H\phi=E\phi  ,\text{ and }  |\phi(k)|\leq \hat{C}( 1+|k|).
\end{equation}
By Shnol's theorem,  in order to
prove  Anderson localization of   $H$,  we only need  to show that  every  generalized eigenfunction  is in fact an exponentially decaying eigenfunction, namely,  there exists some constant $c>0$ such that
\begin{equation*}
	| \phi(k)|\leq  e^{-c|k|} \text{ for large } k.
\end{equation*}	
For simplicity, we assume    $\hat{C}=1$ in \eqref{g68}.


From now on, we always assume $\phi$  is   a  generalized  eigenfunction  of $H_{\lambda,\alpha,\theta}$ and $E$ is the corresponding generalized eigenvalue.  Without loss of generality assume $\phi(0)=1$. It is well known that every generalized eigenvalue must be in the spectrum, namely $E\in \Sigma_{\lambda,\alpha}$, where
$\Sigma_{\lambda,\alpha}$ is  the spectrum of   $H_{\lambda,\alpha,\theta}$ (the spectrum  does not depend on $\theta$).   

For any $x_1,x_2\in\Z$ with $x_1<x_2$, denote by 
\begin{equation*}
	P_{[x_1,x_2]}(\lambda,\alpha,\theta)	=\det(R_{[x_1,x_2]}(H_{\lambda,\alpha,\theta}-E) R_{[x_1,x_2]}),
\end{equation*}	
where $R_{[x_1,x_2]}$ is the restriction on $[x_1,x_2]$. 
Let us denote
$$ P_k(\lambda,\alpha,\theta)=\det(R_{[0,k-1]}(H_{\lambda,\alpha,\theta}-E) R_{[0,k-1]}).$$
When there is no ambiguity, we drop the  dependence on  parameters $E,\lambda,\alpha$ or $ \theta$.
Clearly,
\begin{equation}\label{g108}
	P_{[x_1,x_2]}(\theta)	=P_{k} (\theta+x_1\alpha),
\end{equation}
where $k=x_2-x_1+1$.



Let
\begin{equation}\label{G.transfer}
	A_{k}(\theta)=\prod_{j=k-1}^{0 }A(\theta+j\alpha)=A(\theta+(k-1)\alpha)A(\theta+(k-2)\alpha)\cdots A(\theta)
\end{equation}
and
\begin{equation}\label{G.transfer1}
	A_{-k}(\theta)=A_{k}^{-1}(\theta-k\alpha)
\end{equation}
for $k\geq 1$,
where $A(\theta)=\left(
\begin{array}{cc}
	E-2\lambda\cos2\pi\theta & -1 \\
	1& 0\\
\end{array}
\right)
$.
$A_{k}$  is called the ($k$-step) transfer matrix. 

By the definition,   for  any $k\in\Z_+, m\in\Z$, one has that
\begin{equation}\label{G.new17}
	\left(\begin{array}{c}
		\phi(k+m) \\
		\phi(k+m-1)                                                                                        \end{array}\right)
	=A_{k}(\theta+m\alpha)
	\left(\begin{array}{c}
		\phi(m) \\
		\phi(m-1)                                                                                        \end{array}\right).
\end{equation}

It is easy to check  that for $k\in\Z_+$, 
\begin{equation}\label{G34}
	A_{k}(\theta)=
	\left(
	\begin{array}{cc}
		P_k(\theta) &- P_{k-1}(\theta+\alpha)\\
		P_{k-1}(\theta) & - P_{k-2}(\theta+\alpha) \\
	\end{array}
	\right).
\end{equation}
The Lyapunov exponent 
is given  by
\begin{equation}\label{G21}
	L(E)=\lim_{k\rightarrow\infty} \frac{1}{k}\int_{\mathbb{R}/\mathbb{Z}} \ln \| A_k(\theta)\|d\theta.
\end{equation}
The Lyapunov exponent can be computed precisely for $E$ in the
spectrum of $H_{\lambda,\alpha,\theta}$. 
\begin{lemma}~\cite{bj02}\label{lya}
	For $E\in \Sigma_{\lambda,\alpha}$ and $|\lambda|>1$, we have
	$L(E)=\ln|\lambda|$.
\end{lemma}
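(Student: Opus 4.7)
The plan is to prove $L(E)=\ln|\lambda|$ on $\Sigma_{\lambda,\alpha}$ by establishing two matching inequalities: a universal lower bound valid for every real $E$, and an upper bound on the spectrum that exploits the self-dual structure of the almost Mathieu family.

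First I would establish Herman's inequality $L(E)\ge\ln|\lambda|$ for every real $E$ via subharmonic function theory. Complexify the phase and write
\[
A(\theta+iy)=\begin{pmatrix}E-\lambda\bigl(e^{-2\pi i\theta}e^{2\pi y}+e^{2\pi i\theta}e^{-2\pi y}\bigr) & -1\\ 1 & 0\end{pmatrix}.
\]
The $(1,1)$-entry is dominated by $-\lambda e^{-2\pi i\theta}e^{2\pi y}$ as $y\to+\infty$, so an elementary computation yields $\|A_n(\theta+iy)\|=|\lambda|^n e^{2\pi ny}(1+o(1))$ uniformly in $\theta$. The function $u_n(y):=\tfrac{1}{n}\int_{\T}\ln\|A_n(\theta+iy)\|\,d\theta$ is convex in $y$ (because $\ln\|A_n\|$ is plurisubharmonic in the complex variable $\theta+iy$), even in $y$ (the original matrix has real entries), and asymptotic to $\ln|\lambda|+2\pi|y|$ at infinity. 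Convexity forces $u_n'(y^+)\le 2\pi$ for all $y\ge 0$, hence $u_n(y)-u_n(0)\le 2\pi y$; letting $y\to+\infty$ gives $u_n(0)\ge\ln|\lambda|$, and passing $n\to\infty$ yields $L(E)\ge\ln|\lambda|$.

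For the reverse inequality on the spectrum I would invoke Aubry duality. The integrated density of states obeys $N_\lambda(E)=N_{1/\lambda}(E/\lambda)$ and the spectra satisfy $\Sigma_{\lambda,\alpha}=\lambda\cdot\Sigma_{1/\lambda,\alpha}$. Substituting into Thouless's formula $L_\lambda(E)=\int\ln|E-E'|\,dN_\lambda(E')$ and changing variables $E'=\lambda E''$ produces
\[
L_\lambda(E)=\ln|\lambda|+L_{1/\lambda}(E/\lambda).
\]
For $|\lambda|>1$ the dual coupling $1/\lambda$ is subcritical, and substituting the vanishing $L_{1/\lambda}(E/\lambda)=0$ on the dual spectrum converts the identity into the matching upper bound $L_\lambda(E)\le\ln|\lambda|$ on $\Sigma_{\lambda,\alpha}$, yielding the lemma.

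The main obstacle is supplying the input $L_{1/\lambda}\equiv 0$ on the subcritical spectrum: Herman applied to the dual only gives $L_{1/\lambda}\ge 0$, so the opposite inequality must come from elsewhere. Kotani theory identifies $\{E:L(E)=0\}$ with the essential support of the absolutely continuous spectrum up to measure zero, and combined with continuity of $L$ in $E$ and the fact that the subcritical almost Mathieu operator is purely absolutely continuous on its entire spectrum, this upgrades to every $E\in\Sigma_{1/\lambda,\alpha}$. An alternative closer in spirit to \cite{bj02} is to prove joint continuity of $L(E,\alpha)$ in $(E,\alpha)$ and reduce the problem to rational $\alpha=p/q$, where the classical identity $\operatorname{tr}\,M_q(\theta)=Q_q(E)-2\lambda^q\cos(2\pi q\theta)$ lets Jensen's formula, applied to the trigonometric polynomial $z^q\operatorname{tr}\,M_q(\theta)$ in $z=e^{2\pi i\theta}$ with leading coefficient $-\lambda^q$, directly yield $L(E)=\ln|\lambda|$ on the rational spectrum; continuity then extends the formula to irrational $\alpha$.
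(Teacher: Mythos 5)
The paper does not prove this lemma at all: it is imported wholesale from \cite{bj02}, so your proposal has to be measured against the known Bourgain--Jitomirskaya argument rather than against anything in the text. Your first half is fine: the convexity/asymptotics argument for $u_n(y)$ is a correct rendering of Herman's bound and gives $L(E)\ge\ln|\lambda|$ for every real $E$. The duality step is also correct as an identity: Aubry duality for the IDS plus the Thouless formula gives $L_\lambda(E)=\ln|\lambda|+L_{1/\lambda}(E/\lambda)$ together with $\Sigma_{\lambda,\alpha}=\lambda\Sigma_{1/\lambda,\alpha}$. But note that this makes the lemma \emph{exactly equivalent} to the statement $L_{1/\lambda}\equiv 0$ on $\Sigma_{1/\lambda,\alpha}$; it is a reformulation, not a reduction, so the whole content of the lemma must come from your last paragraph.

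That is where the gaps are. Your first option uses Avila's purely absolutely continuous spectrum theorem \cite{avi08} plus Kotani theory plus continuity of $L$ in $E$. Continuity at irrational $\alpha$ is precisely the main theorem of \cite{bj02} (upper semicontinuity alone goes the wrong way: knowing $L=0$ on an essential support and that $E_0$ is a limit of such energies gives no upper bound on $L(E_0)$ without genuine continuity), and \cite{avi08} is a much later result built on this circle of ideas, so this route is at best top-heavy and arguably circular. Your second option contains a concrete error: it is \emph{not} true that $L(E)=\ln|\lambda|$ on the rational spectrum $\Sigma_{\lambda,p/q}=\bigcup_\theta\sigma(H_{\lambda,p/q,\theta})$. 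Already for $q=1$ and $\lambda>1$, at the spectral edge $E=2+2\lambda$ one has $L(E)=\ln\lambda+\int_0^1\operatorname{arccosh}(1+(1-\cos 2\pi x)/\lambda)\,dx>\ln\lambda$ strictly. What Chambers' formula and the identity $\int_0^1\ln|a-2\cos 2\pi x|\,dx=\max\{0,\ln((|a|+\sqrt{a^2-4})/2)\}$ actually yield on $\Sigma_{\lambda,p/q}$ is the two-sided bound $\ln|\lambda|\le L_{p/q}(E)\le\ln|\lambda|+Cq^{-1}$, with the error coming from phases where the rescaled trace has modulus slightly above $2$; moreover, Jensen applied to $z^q\operatorname{tr}A_q$ gives a \emph{lower} bound for $\int\ln|\operatorname{tr}A_q|$, which is the wrong direction for the upper bound on $L$ that you need. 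The actual proof is: prove joint continuity of $(E,\alpha)\mapsto L(E,\alpha)$ at irrational $\alpha$, combine it with continuity of the spectra and the approximate rational bound to get $L\le\ln|\lambda|$ on $\Sigma_{\lambda,\alpha}$, and finish with Herman's inequality. Your plan can be repaired along these lines, but only by replacing the claimed exact rational computation with the $O(1/q)$ bound and by supplying the continuity theorem, which is the real content of \cite{bj02}.
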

In the following, let
$L := \ln|\lambda| $ be the Lyapunov exponent of the almost Mathieu operator for energies in the spectrum.

By
upper semicontinuity and unique ergodicity,
one has
\begin{equation}\label{G23}
	L=\lim_{k\rightarrow\infty} \sup_{\theta\in\mathbb{R}}\frac{1}{k} \ln \| A_k(\theta)\|.
\end{equation}
Therefore, 
for any  $  \varepsilon >0$,  
\begin{equation}\label{G24}
	\| A_k(\theta)\|\leq  e^{(L+\varepsilon)k},
\end{equation}
when  $k$ is large enough  (independent of $\theta$).

By \eqref{G34} and \eqref{G24},  one has that  for large $k$,
\begin{equation}\label{Numerator}
	| P_k(\theta)|\leq  e^{(L+\varepsilon)k},
\end{equation}
and hence for large $|x_2-x_1|$,
\begin{equation}\label{Numerator1}
	| P_{[x_1,x_2]}(\theta)|\leq  e^{(L+\varepsilon)|x_2-x_1|}.
\end{equation}
By \eqref{G.new17} and \eqref{G24}, one has that  for large $|k_1-k_2|$,
\begin{equation}\label{g500}
	\left \|	\left(\begin{array}{c}
		\phi(k_1+1) \\
		\phi(k_1)                                                                                        \end{array}\right)\right\|
	\leq  e^{(L+\varepsilon) |k_1-k_2|}\left\|
	\left(\begin{array}{c}
		\phi(k_2+1) \\
		\phi(k_2)                                                                                        \end{array}\right)\right\|.
\end{equation}
For any $x_1,x_2\in\Z$ with $x_1<x_2$,  let $G_{[x_1,x_2]}$  be the Green's function:
\begin{equation*}
	G_{[x_1,x_2]}=(R_{[x_1,x_2]} (H_{\lambda,\alpha,\theta}-E) R_{[x_1,x_2]})^{-1}.
\end{equation*}
By Cramer's rule,   for  any
$ y\in [x_1,x_2]  $,  one has
\begin{eqnarray}
	|G_{[x_1,x_2]}(x_1,y)| &=&  \left| \frac{P_{[y+1,x_2]}}{P_{[x_1,x_2]}}\right|,\label{Cramer1}\\
	|G_{[x_1,x_2]}(y,x_2)| &=&\left|\frac{P_{[x_1,y-1]}}{P_{[x_1,x_2]}} \right|.\label{Cramer2}
\end{eqnarray}
It is  easy to check that 
\begin{equation}\label{Block}
	\phi(y)= -G_{[x_1,x_2]}(x_1,y ) \phi(x_1-1)-G_{[x_1,x_2]}(y,x_2) \phi(x_2+1).
\end{equation}

Denote by   $x_1^\prime=x_1-1$ and $x_2^\prime=x_2+1$.

By \eqref{Cramer1}, \eqref{Cramer2} and \eqref{Block}, 
one has that  for any $y\in [x_1,x_2]$, 
\begin{equation}\label{g100}
	|\phi(y)|\leq |P_{[x_1,x_2]}|^{-1} |P_{[x_1,y-1]}| |\phi(x_2^{\prime})| + |P_{[x_1,x_2]}|^{-1} |P_{[y+1,x_2]}| |\phi(x_1^{\prime})|.
\end{equation}

Given  a set $\{\theta_1, \cdots ,\theta_{k+1}\}$,  the Lagrange interpolation terms $\text{Lag}_m$, $m=1,2,\cdots,k+1$, are  defined by
\begin{equation}\label{Def.Uniform}
	\text{Lag}_m= \ln \max_{ x\in[-1,1]} \prod_{ j=1 , j\neq m }^{k+1}\frac{|x-\cos2\pi\theta_j|}
	{|\cos2\pi\theta_m-\cos2\pi\theta_j|}.
\end{equation}

\begin{lemma}
	\label{Le.Uniform}
	Given a set $\{\theta_1, \cdots ,\theta_{k+1}\}$,   there exists some $\theta_m$ in set  $\{\theta_1, \cdots ,\theta_{k+1}\}$ such that
	
	\begin{equation*}
		\left	|	P_{k}\left(\theta_m -\frac{k-1}{2}\alpha\right)\right|\geq \frac{e^{k L-\rm{Lag}_m}}{k+1}.
	\end{equation*}
	
\end{lemma}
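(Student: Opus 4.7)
\textbf{Proof proposal for Lemma \ref{Le.Uniform}.}

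The plan is to exploit the fact that, after the symmetric shift $\theta \mapsto \theta - \tfrac{k-1}{2}\alpha$, the determinant $P_k$ becomes an even trigonometric polynomial in $\theta$, hence a genuine polynomial of degree $k$ in the variable $x = \cos 2\pi\theta$. Once that polynomial structure is in hand, the lemma reduces to Lagrange interpolation at the $k+1$ nodes $x_m = \cos 2\pi\theta_m$, followed by pigeonhole on the leading coefficient and Chebyshev's extremal bound.

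\textbf{Step 1 (polynomial structure and leading coefficient).} First I would expand $P_k(\theta - \tfrac{k-1}{2}\alpha)$ via the usual cofactor expansion of the Jacobi determinant. The diagonal entries are $E - 2\lambda \cos 2\pi(\theta + (j - \tfrac{k-1}{2})\alpha)$ for $j = 0,\dots,k-1$, and the offsets $j - \tfrac{k-1}{2}$ are symmetric about $0$. Writing each $2\cos 2\pi(\theta+a)$ as $e^{2\pi i(\theta+a)} + e^{-2\pi i(\theta+a)}$ and expanding the product of diagonal entries, the top-degree trigonometric monomials $e^{\pm 2\pi i k\theta}$ appear with equal coefficients $\lambda^k$ (because $\sum_j(j - \tfrac{k-1}{2}) = 0$), so the diagonal product contributes $2\lambda^k \cos 2\pi k\theta$ at top order. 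All lower-order terms in the cofactor expansion are trigonometric polynomials of degree strictly less than $k$ in $\theta$. Since $\cos 2\pi k\theta = T_k(\cos 2\pi\theta)$ with leading coefficient $2^{k-1}$, I conclude that $P_k(\theta - \tfrac{k-1}{2}\alpha) = Q(\cos 2\pi\theta)$ for a unique polynomial $Q$ of degree $k$ with leading coefficient $(2\lambda)^k$.

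\textbf{Step 2 (Lagrange interpolation and pigeonhole).} Since $\deg Q = k$, the Lagrange identity at the $k+1$ nodes $x_m$ is exact:
\begin{equation*}
Q(x) = \sum_{m=1}^{k+1} Q(x_m) \prod_{j\neq m} \frac{x - x_j}{x_m - x_j}.
\end{equation*}
Comparing the coefficient of $x^k$ on both sides gives
\begin{equation*}
(2\lambda)^k = \sum_{m=1}^{k+1} \frac{Q(x_m)}{\prod_{j\neq m}(x_m - x_j)}.
\end{equation*}
By the pigeonhole principle, at least one index $m$ satisfies
\begin{equation*}
|Q(x_m)| \;\geq\; \frac{(2|\lambda|)^k}{k+1}\,\prod_{j\neq m}|x_m - x_j|.
\end{equation*}

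\textbf{Step 3 (Chebyshev minimax to absorb the $2^k$ factor).} By definition,
\begin{equation*}
e^{\mathrm{Lag}_m} \,\prod_{j\neq m}|x_m - x_j| \;=\; \max_{x\in[-1,1]}\prod_{j\neq m}|x - x_j|.
\end{equation*}
The right-hand side is the sup-norm on $[-1,1]$ of a monic polynomial of degree $k$, which by the Chebyshev minimax theorem is at least $2^{1-k} \geq 2^{-k}$. Hence
\begin{equation*}
\prod_{j\neq m}|x_m - x_j| \;\geq\; \frac{1}{2^k}\,e^{-\mathrm{Lag}_m}.
\end{equation*}
Combining with Step 2 and invoking Lemma \ref{lya} (so $L = \ln|\lambda|$) produces
\begin{equation*}
|Q(x_m)| \;\geq\; \frac{(2|\lambda|)^k}{k+1}\cdot\frac{1}{2^k}\,e^{-\mathrm{Lag}_m} \;=\; \frac{e^{kL - \mathrm{Lag}_m}}{k+1},
\end{equation*}
which, recalling $Q(x_m) = P_k(\theta_m - \tfrac{k-1}{2}\alpha)$, is exactly the claim.

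\textbf{Expected obstacle.} None of the three steps is genuinely hard; the only mildly subtle point is verifying cleanly in Step 1 that the symmetric shift is precisely what kills the would-be leading coefficients of the form $e^{2\pi i(\pm k\theta + \text{shift})}$ and lets the diagonal product collapse onto a pure Chebyshev polynomial. Everything else is a coefficient-matching/pigeonhole argument together with the standard Chebyshev minimax bound.
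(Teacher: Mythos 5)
Your proposal is correct, and it is a mild but genuine variant of the argument behind this lemma. The paper itself gives no proof — the statement is quoted among the ``basics'' from \cite{liuetds20} and its references — and the standard proof there shares your Step 1 (write $P_k(\theta-\frac{k-1}{2}\alpha)=Q(\cos 2\pi\theta)$ with $\deg Q=k$) but runs the interpolation the other way: one first lower-bounds $\max_{x\in[-1,1]}|Q(x)|=\sup_{\theta}|P_k(\theta)|\geq |\lambda|^k=e^{kL}$, most simply because the coefficient of $e^{2\pi i k\theta}$ in $P_k(\theta)$ has modulus $|\lambda|^k$, and then uses the Lagrange identity as an upper bound, $\max_{x\in[-1,1]}|Q(x)|\leq \sum_m |Q(x_m)|\,e^{\mathrm{Lag}_m}$, and pigeonholes over the $k+1$ terms. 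You instead pigeonhole on the divided-difference (leading-coefficient) identity and then invoke Chebyshev's minimax bound to convert $\prod_{j\neq m}|x_m-x_j|$ into $e^{-\mathrm{Lag}_m}$, the $2^k$ in the leading coefficient of $Q$ exactly absorbing the Chebyshev factor $2^{1-k}$. Both routes are short pigeonhole arguments; the standard one needs only the top Fourier coefficient of $P_k$ and no Chebyshev input, while yours is self-contained at the level of polynomials in $x$ and makes the $2^k$ cancellation explicit.

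Two points to tighten, neither fatal. In Step 1, symmetry of the offsets about $0$ by itself only shows that $\theta\mapsto-\theta$ reverses the order of the diagonal entries; to conclude that $P_k(\theta-\frac{k-1}{2}\alpha)$ is even in $\theta$ (hence a polynomial in $\cos 2\pi\theta$, which Steps 2--3 require) you must add that the determinant of the Jacobi block is invariant under reversing the index order (conjugation by the flip matrix). Also the leading coefficient is $(-2\lambda)^k$ rather than $(2\lambda)^k$, immaterial since only the modulus $(2|\lambda|)^k$ enters; and, exactly as in the standard proof, the interpolation identity presupposes that the nodes $\cos 2\pi\theta_m$ are pairwise distinct — if two coincide, the corresponding $\mathrm{Lag}_m=+\infty$ and the claimed inequality holds vacuously for that $m$.
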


Assume $k$ is odd.
Set $I=[j_0-\frac{k+1}{2}+1,j_0+\frac{k+1}{2}-1]=[x_1,x_2]$.  Assume that  $ |P_{[x_1,x_2]}|\geq e^{ k L - \mu}$.
By  \eqref{Numerator1}  and  \eqref{g100}, for any $y\in [x_1,x_2]$ with large enough $|y-x_1|$ and $|y-x_2|$,
one has that
\begin{align}
	|\phi(y)| & \leq  \sum_{i=1,2} e^{(L+\varepsilon )(k-|y-x_i|)-kL +\mu}|\phi(x_i^\prime)|\nonumber\\
	&   \leq e^{\mu} \sum_{i=1,2} | \phi(x_i^{\prime})|e^{\varepsilon k-L|y-x_i| }.\label{g103}
\end{align}

\section{Technical preparations }\label{S4}
The  estimates in the following proposition will be constantly used in the proof.
\begin{proposition}\label{prop}
	Let $\theta$ be given by \eqref{g5}.
	When  $j$ is large enough, 
	 we have that 
	\begin{enumerate}
		\item  for any $ k=k_j+l_1 q_{2j-1}$ with $1\leq |l_1|\leq \frac{ q_{2j}}{10}$, 
		\begin{equation}\label{gdc1}
			||2\theta+k\alpha||_{\R/\Z}\geq \frac{1}{4q_{2j}};
		\end{equation}
		\item   for any $ k=k_j+l_1 q_{2j-1}+l_2$ with $|l_1|\leq \frac{ q_{2j}}{10q_{2j-1} }$ and $1\leq   l_2<q_{2j-1}$, 
		\begin{equation}\label{gdc2}
			||2\theta+k\alpha||_{\R/\Z} \geq \frac{1}{4q_{2j-1}};
		\end{equation}
		\item  for any  $ k=k_j+l_1 q_{2j}$ with $|l_1|\leq \frac{\eta q_{2j+1}}{30q_{2j} }$,
		\begin{equation}\label{gdc3}
			||2\theta+k\alpha||_{\R/\Z} \geq \frac{\eta}{20q_{2j}};
		\end{equation}
		\item    for any $ k=k_j+l_1 q_{2j}+l_2$ with $|l_1|\leq \frac{10\eta q_{2j+1}}{q_{2j} }$ and $1\leq  l_2<q_{2j}$, 
		\begin{equation}\label{gdc4}
			||2\theta+k\alpha||_{\R/\Z} \geq \frac{1}{4q_{2j}};
		\end{equation}
		\item  for any  $k=l_1 q_n+l_2$ with $|l_1|\leq \frac{q_{n+1}}{10q_n}$ and $1\leq l_2<q_n$,
		\begin{equation}\label{gdc5}
			|| k\alpha||_{\R/\Z} \geq \frac{1}{4q_{n}}.
		\end{equation}
	\end{enumerate}
\end{proposition}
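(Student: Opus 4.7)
The plan is to reduce each of the five bounds to a single application of the triangle inequality on $\R/\Z$, after decomposing $k$ appropriately, with each piece estimated via the continued-fraction bounds \eqref{GDC1}--\eqref{GDC2} together with one consequence of \eqref{g0}. That consequence, which I extract at the outset, is the following. Since $(-k_j\alpha)\bmod\Z\equiv -k_j\alpha\pmod{\Z}$, the bracket in \eqref{g0} is in fact a double-sided bound on a representative of $(2\theta+k_j\alpha)\bmod\Z$, and combined with $\eta\leq 10^{-2}$ yields
\begin{equation*}
\frac{\eta}{10q_{2j}}\;\leq\;\|2\theta+k_j\alpha\|_{\R/\Z}\;\leq\;\frac{10\eta}{q_{2j}}\;\leq\;\frac{1}{10q_{2j}}.
\end{equation*}

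Part 5 is a pure continued-fraction statement independent of $\theta$ and serves as the template for the rest. With $k=l_1q_n+l_2$, \eqref{GDC2} gives $|l_1|\|q_n\alpha\|_{\R/\Z}\leq 1/(10q_n)<1/2$, so $\|l_1q_n\alpha\|_{\R/\Z}=|l_1|\|q_n\alpha\|_{\R/\Z}\leq 1/(10q_n)$, while $\|l_2\alpha\|_{\R/\Z}\geq\|q_{n-1}\alpha\|_{\R/\Z}\geq 1/(2q_n)$ by \eqref{GDC1}--\eqref{GDC2}, and one triangle inequality gives \eqref{gdc5}. Parts 1 and 3 are two-term variants: decompose $k=k_j+l_1q_{2j-1}$ or $k=k_j+l_1q_{2j}$ and combine the Part 5-type estimate of $\|l_1q_m\alpha\|_{\R/\Z}$ with the displayed bound on $\|2\theta+k_j\alpha\|_{\R/\Z}$. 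In Part 1 the $q_{2j-1}$ piece is the main term ($\geq 1/(2q_{2j})$) while the $\theta$ piece is the perturbation ($\leq 1/(10q_{2j})$); in Part 3 the roles reverse and the lower bound $\eta/(10q_{2j})$ survives against an auxiliary of size at most $\eta/(30q_{2j})$. Parts 2 and 4 are three-term variants: split the $k-k_j$ layer via the Part 5 argument, obtaining a main term of size $\geq 1/(2q_{2j-1})$ or $\geq 1/(2q_{2j})$, and add the $\theta$ piece as a third, negligible term.

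The only subtlety is ensuring that each reverse triangle inequality is actually valid, i.e., that no intermediate partial sum wraps past $1/2$ in $\R/\Z$. The conditions $\eta\leq 10^{-2}$ and the stated constraints on $l_1$ were arranged precisely so that every auxiliary term is at most $1/10$ in absolute value; representing each piece by its canonical representative in $(-1/2,1/2]$ then reduces every step to ordinary real-number triangle inequalities. I anticipate no substantive obstacle beyond bookkeeping: once the consequence of \eqref{g0} displayed above is in hand and the right decomposition of $k$ is chosen in each case, the five bounds follow directly.
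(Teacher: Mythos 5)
Your proposal is correct and matches the paper's own proof essentially line for line: both extract the two-sided bound $\frac{\eta}{10q_{2j}} \leq \|2\theta + k_j\alpha\|_{\R/\Z} \leq \frac{10\eta}{q_{2j}}$ from \eqref{g0} and then apply the reverse triangle inequality on $\R/\Z$ together with \eqref{GDC1}--\eqref{GDC2}, the constraints on $l_1$, $l_2$, $\eta$ keeping each auxiliary piece well below $1/2$ so that $\|l_1 q_m\alpha\|_{\R/\Z}=|l_1|\,\|q_m\alpha\|_{\R/\Z}$ holds where needed. No gaps.
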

	\begin{proof}
		The proof is based on \eqref{GDC1}, \eqref{GDC2}  and \eqref{g0}.
		By 	direct computations, 
		we have that   in (1),
		\begin{align*}
			||2\theta+k\alpha||_{\R/\Z} &= ||2\theta+k_{j}\alpha+l_1q_{2j-1}\alpha ||_{\R/\Z} \\
			&\geq   |l_1|\;||q_{2j-1}\alpha||_{\R/\Z}-|| 2\theta+k_{j}\alpha||_{\R/\Z} \\
			&\geq \frac{|l_1|}{2q_{2j}} -\frac{10\eta}{q_{2j}}\\
			&\geq \frac{1}{4q_{2j}},
		\end{align*}	
		in (2)
		\begin{align*}
			||2\theta+k\alpha||_{\R/\Z} &= ||2\theta+k_{j}\alpha+l_1q_{2j-1}\alpha+l_2\alpha||_{\R/\Z} \\
			&\geq   || l_2\alpha||_{\R/\Z} -||2\theta+k_{j}\alpha||_{\R/\Z}-||l_1q_{2j-1}\alpha||_{\R/\Z}\\
			&\geq\frac{1}{2q_{2j-1}}-\frac{10\eta}{q_{2j}}-\frac{|l_1|}{q_{2j}}\\
			&\geq \frac{1}{2q_{2j-1}}-\frac{10\eta}{q_{2j}}-\frac{1}{10q_{2j-1}}\\
			&\geq \frac{1}{4q_{2j-1}},
		\end{align*}	
		in (3),
		\begin{align*}
			||2\theta+k\alpha||_{\R/\Z} &= ||2\theta+k_{j}\alpha+l_1q_{2j}\alpha ||_{\R/\Z} \\
			&\geq   || 2\theta+k_{j}\alpha||_{\R/\Z} -|l_1|\;||q_{2j}\alpha||_{\R/\Z}\\
			&\geq\frac{\eta}{10q_{2j}}-\frac{|l_1|}{q_{2j+1}}\\
			&\geq \frac{\eta}{20q_{2j}},
		\end{align*}
		in (4)
		\begin{align*}
			||2\theta+k\alpha||_{\R/\Z} &= ||2\theta+k_{j}\alpha+l_1q_{2j}\alpha+l_2\alpha||_{\R/\Z} \\
			&\geq   || l_2\alpha||_{\R/\Z} -||2\theta+k_{j}\alpha||_{\R/\Z}-||l_1q_{2j}\alpha||_{\R/\Z}\\
			&\geq\frac{1}{2q_{2j}}-\frac{10\eta}{q_{2j}}-\frac{|l_1|}{q_{2j+1}}\\
			&\geq \frac{1}{2q_{2j}}-\frac{20\eta}{q_{2j}}\\
			&\geq \frac{1}{4q_{2j}},
		\end{align*}	
		and in (5),
		\begin{align*}
			|| k\alpha||_{\R/\Z} &= || l_1 q_n\alpha+l_2 \alpha||_{\R/\Z} \\
			&\geq   || l_2\alpha||_{\R/\Z} -||l_1q_{n}\alpha||_{\R/\Z}\\
			&\geq\frac{1}{2q_{n}}- \frac{|l_1|}{q_{n+1}}\\
			&\geq \frac{1}{4q_{n}}.
		\end{align*}

	\end{proof}

\begin{lemma}\label{le0}
	Let $\theta$ be defined by \eqref{g5}. Then
	\begin{equation}\label{g62}
	\delta(\alpha,\theta)= \limsup_{j\to \infty}-\frac{\ln ||2\theta+k_j\alpha||_{\R/\Z}}{k_j}=  \limsup_{j\to \infty}\frac{\ln q_{2j}}{k_j}
	\end{equation}
and 
	\begin{equation}\label{g6}
	 (1-10^{-14})\frac{\beta}{\eta}\leq 	\delta(\alpha,\theta)\leq (1+10^{-14})\frac{\beta}{\eta}.
	\end{equation}
\end{lemma}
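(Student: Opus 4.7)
The statement separates into (i) showing $\|2\theta+k_j\alpha\|_{\R/\Z}\asymp \eta/q_{2j}$, (ii) computing the asymptotic size of $k_j$, and (iii) showing that no $k\ne k_j$ competes with the subsequence $\{k_j\}$ in the $\limsup$ defining $\delta(\alpha,\theta)$. Parts (i) and (ii) are routine; part (iii) is the main obstacle.

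For (i), the construction \eqref{g0} directly gives $\|2\theta+k_j\alpha\|_{\R/\Z}\in[\eta/(10q_{2j}),\,10\eta/q_{2j}]$, hence
\[
-\frac{\ln\|2\theta+k_j\alpha\|_{\R/\Z}}{k_j}=\frac{\ln q_{2j}}{k_j}+O(1/k_j),
\]
proving the second equality in \eqref{g62} and the lower bound $\delta(\alpha,\theta)\ge\limsup_j(\ln q_{2j})/k_j$. For (ii), the recursion \eqref{g4} and the estimate \eqref{gj22} give $k_{j+1}-k_j=\lfloor\eta q_{2j+1}/q_{2j}\rfloor q_{2j}\in[(1-10^{-17})\eta q_{2j+1},\eta q_{2j+1}]$. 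Because \eqref{gnew1} forces $q_{n+1}\ge(10^{18}/\eta)q_n$, the telescoping sum is dominated by its final term and $k_j=\eta q_{2j-1}(1+O(10^{-17}))$ for $j$ large. Combined with the Case 1 hypothesis $\limsup_j(\ln q_{2j})/q_{2j-1}=\beta(\alpha)$, this produces $\limsup_j(\ln q_{2j})/k_j=(\beta/\eta)(1+O(10^{-16}))$, yielding \eqref{g6} once the first equality in \eqref{g62} is known.

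For (iii), I take $k>0$ (negative $k$ by symmetry), pick $j$ with $k_j\le k<k_{j+1}$, and write $m=k-k_j=l_1 q_{2j}+l_2$ with $0\le l_2<q_{2j}$ and $0\le l_1\le\lfloor\eta q_{2j+1}/q_{2j}\rfloor$. When $l_2\ge 1$, Proposition \ref{prop}(4) gives $\|2\theta+k\alpha\|_{\R/\Z}\ge 1/(4q_{2j})$; when $l_2=0$ with $1\le l_1\le\eta q_{2j+1}/(30q_{2j})$, part (3) gives $\eta/(20q_{2j})$. In each case the ratio $-\ln\|2\theta+k\alpha\|_{\R/\Z}/|k|$ is at most $(\ln q_{2j})/k_j\cdot(1+o(1))$. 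The delicate subcase is $l_2=0$ with $l_1\in(\eta q_{2j+1}/(30q_{2j}),\lfloor\eta q_{2j+1}/q_{2j}\rfloor)$, which Proposition \ref{prop} does not address; I resolve it by a direct computation, setting $r=\lfloor\eta q_{2j+1}/q_{2j}\rfloor-l_1\ge 1$ and noting
\[
2\theta+k\alpha\equiv(2\theta+k_{j+1}\alpha)-r(q_{2j}\alpha-p_{2j})\pmod 1,
\]
where the first summand is $O(\eta/q_{2j+2})$ by the construction of $I_{j+1}$ and the second has magnitude $\asymp r/q_{2j+1}$ by \eqref{GDC2}. Since $q_{2j+2}\gg q_{2j+1}$, this yields $\|2\theta+k\alpha\|_{\R/\Z}\ge r/(2q_{2j+1})$, from which the ratio is $o(1)$ uniformly in $r$ in this regime (because $|k|\gtrsim \eta q_{2j+1}/30$ throughout). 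Assembling all cases gives the reverse inequality in \eqref{g62} and the upper bound in \eqref{g6}.
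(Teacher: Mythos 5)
Your proposal takes essentially the same route as the paper: establish $k_j \approx \eta q_{2j-1}$ from \eqref{g4} and \eqref{gj22}, read off $\|2\theta+k_j\alpha\|_{\R/\Z}\asymp\eta/q_{2j}$ directly from \eqref{g0} for the lower bound, and then show by casework that no other $k$ competes. The one soft spot is the parenthetical ``(negative $k$ by symmetry)'': there is no symmetry here, since all the $k_j$ are positive. For $k<0$ with $k_j<|k|\le k_{j+1}$, you still write $k-k_j=l_1 q_{2j}+l_2$ with $0\le l_2<q_{2j}$, but now $l_1$ is \emph{negative} with $|l_1|$ up to roughly $2\eta q_{2j+1}/q_{2j}$, which lies outside your stated range $0\le l_1\le\lfloor\eta q_{2j+1}/q_{2j}\rfloor$ and in particular outside the range of Proposition~\ref{prop}(3) when $l_2=0$. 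Your ``delicate'' computation still goes through for such $l_1$ (with $r=\lfloor\eta q_{2j+1}/q_{2j}\rfloor-l_1$ now exceeding $\lfloor\eta q_{2j+1}/q_{2j}\rfloor$, still giving $|r|\|q_{2j}\alpha\|\lesssim 2\eta/q_{2j}<1/2$), but you must say this; it is not free. The paper sidesteps the issue by splitting on $|k|$ rather than on $l_2$: for $|k|\ge 10^{-3}\eta q_{2j+1}$, $k\ne k_{j+1}$, it compares $2\theta+k\alpha$ with $2\theta+k_{j+1}\alpha$ and uses $1\le|k-k_{j+1}|<q_{2j+1}$ together with \eqref{GDC1} to get the uniform bound $1/(4q_{2j+1})$, which handles both signs of $k$ at once and avoids the out-of-range $l_1$ entirely. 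Otherwise your cases, estimates, and the derivation of \eqref{g6} from \eqref{g62} all match the paper's.
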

\begin{proof}
	By \eqref{g4}, one has  that for $j\geq j_0$,
	\begin{equation}\label{gj31}
-q_{2j}	\leq k_{j+1} -k_j-\eta q_{2j+1}\leq 0.
	\end{equation}
	By \eqref{gj31} and inductions, we have that for  large $j$,
	\begin{equation}\label{g501}
		|k_j-\eta q_{2j-1}| \leq 100 q_{2j-2}\leq 10^{-15} \eta q_{2j-1}.
	\end{equation}
This implies  that
	\begin{align}
	\delta(\alpha,\theta)&=\limsup_{k\to \infty}-\frac{\ln ||2\theta+k\alpha||_{\R/\Z}}{|k|}\nonumber\\
	&\geq \limsup_{j\to \infty}-\frac{\ln ||2\theta+k_j\alpha||_{\R/\Z}}{k_j}\label{gj151}\\
	&\geq  (1-10^{-14})\frac{\beta}{\eta}.\label{gj152}
\end{align}

	Assume $k$ satisfies  $k_j<|k|\leq k_{j+1}$.
	
	{\bf Case 1:} $|k|\geq 10^{-3} \eta q_{2j+1}$ and $k\neq k_{j+1}$
	
	In this case, by \eqref{GDC1}, \eqref{GDC2}  and \eqref{g0}, one has 
	\begin{align*}
		||2\theta+k\alpha||_{\R/\Z} &= ||2\theta+k_{j+1}\alpha+(k-k_{j+1})\alpha||_{\R/\Z} \\
		&\geq   || (k-k_{j+1})\alpha||_{\R/\Z} -||2\theta+k_{j+1}\alpha||_{\R/\Z}\\
		&\geq ||q_{2j}\alpha||_{\R/\Z}-||2\theta+k_{j+1}\alpha||_{\R/\Z}\\
		&\geq \frac{1}{2q_{2j+1}}-\frac{10\eta}{q_{2j+2}}\\
		&\geq \frac{1}{4q_{2j+1}}.
	\end{align*}	
	
	{\bf Case 2:} $k_j<|k|< 10^{-3} \eta q_{2j+1}$
	
	Write $k$ as $k=k_j+l_1 q_{2j}+l_2$, where $l_1\in\Z$ with $ |l_1|\leq \frac{ 10^{-3} \eta q_{2j+1}}{q_{2j}}+2$ and $0\leq l_2<q_{2j}$.
	
	By \eqref{gdc3} and \eqref{gdc4}, one has that 
	\begin{equation*}
		||2\theta+k\alpha||_{\R/\Z} \geq \frac{\eta}{20q_{2j}}.
	\end{equation*}

	Putting both cases together,   we have that 
	\begin{align}
		\delta(\alpha,\theta)&=\limsup_{k\to \infty}-\frac{\ln ||2\theta+k\alpha||_{\R/\Z}}{|k|}\nonumber\\
		&\leq \limsup_{j\to \infty}\frac{\ln q_{2j}}{k_j} \label{gj153}\\
		&\leq  (1+10^{-14})\frac{\beta}{\eta}.\label{gj154}
	\end{align}
We conclude that   \eqref{g62} follows from \eqref{gj151} and \eqref{gj153},  and \eqref{g6}  follows from \eqref{gj152} and \eqref{gj154}.
\end{proof}
\begin{lemma}\label{le01}
	Let $\theta$ be defined by \eqref{g5}. Assume $ \liminf a_n=\infty$.  Then
	\begin{equation}\label{g61}
	 	\delta(\alpha,\theta)=\frac{\beta}{\eta}.
	\end{equation}
\end{lemma}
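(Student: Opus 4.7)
The plan is to sharpen the proof of Lemma \ref{le0} by exploiting the stronger assumption $\liminf_{n\to\infty} a_n = \infty$ to upgrade the two-sided bound $(1\pm 10^{-14})\beta/\eta$ into an exact equality. The key identity from Lemma \ref{le0} that I would start from is
\begin{equation*}
\delta(\alpha,\theta) = \limsup_{j\to\infty} \frac{\ln q_{2j}}{k_j},
\end{equation*}
so the task reduces to understanding the asymptotic size of $k_j$ relative to $q_{2j-1}$.

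First, I would use the recurrence $q_{2j-1} = a_{2j-1} q_{2j-2} + q_{2j-3}$ together with $\liminf_n a_n = \infty$ to conclude that $q_{2j-2}/q_{2j-1} \to 0$ as $j\to\infty$. This is the crucial improvement over the fixed bound $10^{-17}$ used in Section 2: the ratio is no longer just small, it is $o(1)$. Feeding this into the estimate \eqref{g501} from Lemma \ref{le0}, which reads $|k_j - \eta q_{2j-1}| \leq 100 q_{2j-2}$, yields
\begin{equation*}
k_j = \eta q_{2j-1}\bigl(1 + o(1)\bigr) \quad \text{as } j\to\infty.
\end{equation*}

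Substituting this asymptotic into the identity for $\delta(\alpha,\theta)$ gives
\begin{equation*}
\delta(\alpha,\theta) = \limsup_{j\to\infty} \frac{\ln q_{2j}}{\eta q_{2j-1}(1+o(1))} = \frac{1}{\eta}\limsup_{j\to\infty}\frac{\ln q_{2j}}{q_{2j-1}}.
\end{equation*}
Since we are working in Case 1, we have $\limsup_{j\to\infty}\frac{\ln q_{2j}}{q_{2j-1}} = \beta(\alpha)$ by definition of the case (and \eqref{g3}), so $\delta(\alpha,\theta) = \beta/\eta$, as claimed.

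There is no genuine obstacle here; the argument is essentially a limiting refinement of Lemma \ref{le0}. The only point requiring mild care is making sure that the $o(1)$ factor can be pulled out of the $\limsup$: since $1+o(1)$ is bounded away from $0$ and tends to $1$, the two $\limsup$'s differ only by a factor tending to $1$, so equality holds. (For the corresponding Case 2, one would instead use the identity analogous to \eqref{g62} coming from \eqref{g5case2}, together with the analogue of \eqref{g501} that follows from \eqref{g4case2}; the algebra is identical after shifting indices.)
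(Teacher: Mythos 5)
Your proof is correct and follows essentially the same route as the paper: both use the identity $\delta(\alpha,\theta)=\limsup_{j}\frac{\ln q_{2j}}{k_j}$ from \eqref{g62}, the estimate $|k_j-\eta q_{2j-1}|\le 100\,q_{2j-2}$ from \eqref{g501}, and the hypothesis $\liminf a_n=\infty$ to upgrade the bound to $k_j/q_{2j-1}\to\eta$. The paper simply records this as $\lim_{j\to\infty}k_j/q_{2j-1}=\eta$ (its equation \eqref{gj32}) and concludes; your version spells out the same limiting argument in slightly more detail.
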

\begin{proof}
By the first inequality of \eqref{g501} and the fact that  $ \liminf a_n=\infty$, one has that 
\begin{equation}\label{gj32}
\lim_{j\to\infty} \frac{k_j}{q_{2j-1}}=\eta.
\end{equation}
Now \eqref{g61} follows from \eqref{g62} and \eqref{gj32}.
\end{proof}
\section{Non-resonant sites} \label{S5}

In the following,  we always assume 
\begin{itemize}
	\item $\theta$ is given by \eqref{g5},
	\item $L=\ln |\lambda|> \delta(\alpha,\theta)$,
	\item $\varepsilon>0$ is an arbitrarily small constant. 
	\item  $C$ is a large constant (depends on $\lambda$ and $\alpha$) and  it may change even in the same equation,
	\item $n$ is large enough  which depends on all  parameters and constants.
\end{itemize}
Let $b_n=10^{-7}\eta q_n$. 
For any $\ell \in\Z$, let
\begin{equation*}
	r_{\ell}^{\varepsilon,n}= \sup_{|r|\leq 10\varepsilon}|\phi(\ell q_n+rq_n)|,
\end{equation*}
and for $n=2j-1$, 
\begin{equation*}
	{r}_{\ell+\eta}^{\varepsilon,n}= \sup_{|r|\leq 10\varepsilon}|\phi(\ell q_n+k_j+rq_n)|.
\end{equation*}
From \eqref{g501}, one can see that   $ k_j\approx \eta q_n$. So   $	{r}_{\ell+\eta}^{\varepsilon,n}$ is essentially the value of $|\phi(y)|$ with $y\approx (\ell+\eta)q_n$. 
\begin{lemma}\label{le2}
	Assume that $|\lambda|>1$.
	Let $\ell$ be such that $0\leq |\ell| \leq  50\frac{b_{n+1}}{q_n}$. 
	Then for sufficiently large
	$n $, the following statements hold:
	\begin{itemize}
		\item 	when  $n=2j-1$ (namely, $n$ is odd),  we have that  for any 
		$y\in [\ell q_n+10\varepsilon q_n,\ell q_n+k_j -10\varepsilon q_n]$, 
		\begin{equation}\label{g11}
			|\phi(y)|\leq r_{\ell}^{\varepsilon,n}\exp\{-(L- \varepsilon)(|y-\ell q_n|-3\varepsilon q_n)\} + {r}_{\ell+\eta}^{\varepsilon,n}\exp\{-(L- \varepsilon)(|\ell q_n+k_j-y|-3\varepsilon q_n)\},
		\end{equation}
		and for   any 
		$y\in [\ell q_n+  k_j+10\varepsilon q_n, \ell q_n+q_n-10\varepsilon q_n]$, 
		\begin{equation}\label{g13}
			|\phi(y)|\leq r_{\ell+\eta}^{\varepsilon,n}\exp\{-(L- \varepsilon)(|y-\ell q_n- k_j|-3\varepsilon q_n)\} + {r}_{\ell+1}^{\varepsilon,n}\exp\{-(L- \varepsilon)(|\ell q_n+q_n-y|-3\varepsilon q_n)\};
		\end{equation}
		\item 
		when  $n=2j$ (namely, $n$ is even),  
		we have that  for any 
		$y\in [\ell q_n+10\varepsilon q_n,(\ell+ 1)q_n-10\varepsilon q_n]$, 
		\begin{equation}\label{g12}
			|\phi(y)|\leq r_{\ell}^{\varepsilon,n}\exp\{-(L- \varepsilon)(|y-\ell q_n|-3\varepsilon q_n)\} + {r}_{\ell+1}^{\varepsilon,n}\exp\{-(L- \varepsilon)(|\ell q_n+q_n-y|-3\varepsilon q_n)\}.
		\end{equation}
	\end{itemize}

\end{lemma}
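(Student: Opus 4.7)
The plan is to apply the block-resolvent bound \eqref{g100}--\eqref{g103} on intervals tailored to each case, with the exponential determinant lower bound supplied by Lemma \ref{Le.Uniform} and the small-denominator inputs supplied by Proposition \ref{prop}. In the odd case $n=2j-1$, the key geometric observation is that the phase resonance inside the window $[\ell q_n,(\ell+1)q_n]$ sits at position $\ell q_n+k_j$, since $k_j\approx \eta q_{2j-1}$ by \eqref{g501}. I would therefore split the window into the sub-blocks $[\ell q_n,\ell q_n+k_j]$ and $[\ell q_n+k_j,(\ell+1)q_n]$ and treat each as a non-resonant interval, which is exactly what produces the two contributions $r_{\ell}^{\varepsilon,n}$ and $r_{\ell+\eta}^{\varepsilon,n}$ in \eqref{g11}, and $r_{\ell+\eta}^{\varepsilon,n}$ and $r_{\ell+1}^{\varepsilon,n}$ in \eqref{g13}. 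For even $n=2j$ there is no phase resonance inside the window, so a single block of length $\approx q_n$ suffices and yields \eqref{g12}.

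For each sub-case I would then select endpoints $x_1,x_2$ inside $10\varepsilon q_n$-neighbourhoods of the sub-block endpoints, with $y$ at distance $\ge 10\varepsilon q_n$ from each. Varying the shift inside this neighbourhood gives $\asymp \varepsilon q_n$ candidate interpolation phases, so Lemma \ref{Le.Uniform} supplies one shift $m$ for which $|P_{[x_1,x_2]}(\theta)| \ge e^{kL-\mathrm{Lag}_m}/(k+1)$. The endpoint terms $|\phi(x_1-1)|$ and $|\phi(x_2+1)|$ appearing in \eqref{g103} are, by construction, dominated by the appropriate $r$-quantities. The heart of the argument is to show $\mathrm{Lag}_m\le C\varepsilon q_n$. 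Writing $|\cos 2\pi a-\cos 2\pi b|\ge c\|a-b\|_{\R/\Z}\,\|a+b\|_{\R/\Z}$, the Lagrange denominators split into a frequency factor $\prod \|(j-m)\alpha\|_{\R/\Z}^{-1}$ and a phase factor $\prod \|2\theta+(j+m+\mathrm{const})\alpha\|_{\R/\Z}^{-1}$. Every index occurring in these products lies in the ranges covered by Proposition \ref{prop}: the frequency denominators are bounded below by $1/(4q_n)$ via item (5), and the phase denominators by $\eta/(20 q_n)$ or $1/(4q_{n-1})$ via items (1)--(4), where in particular item (3) keeps the near-resonance at $k_j$ harmless since by construction $k_j$ lies at an endpoint, not in the interior, of the block. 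A Stirling-type counting argument then sums these pointwise bounds to yield $\mathrm{Lag}_m\le C\varepsilon q_n$.

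The main obstacle is the careful bookkeeping, case by case and parity by parity, to verify that every index arising in the Lagrange products indeed lands in the admissible ranges of Proposition \ref{prop}. This is precisely where the hypothesis $|\ell|\le 50 b_{n+1}/q_n = 5\cdot10^{-6}\eta q_{n+1}/q_n$ enters: it ensures that indices of the form $\ell q_n+l_1 q_{n-1}+l_2$ appearing in each block stay inside the windows $|l_1|\le q_{n+1}/(10q_n)$, etc., required by Proposition \ref{prop}. Once the bound $|P_{[x_1,x_2]}|\ge e^{kL-C\varepsilon q_n}/(k+1)$ is established, inserting it into \eqref{g103} and absorbing the polynomial factor, the Lagrange error, and the $10\varepsilon q_n$ slack in endpoint position into the $3\varepsilon q_n$ term in the exponent yields the three claimed inequalities \eqref{g11}, \eqref{g13}, \eqref{g12}.
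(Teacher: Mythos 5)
Your high-level plan (split at the phase-resonant site $\ell q_n+k_j$ for odd $n$, use the block-resolvent bound \eqref{g103} with determinants supplied by Lemma~\ref{Le.Uniform} and denominator bounds from Proposition~\ref{prop}) starts in the right place, but the block construction is wrong, and the gap is not cosmetic.

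You propose to use a \emph{single} block $[x_1,x_2]$ whose endpoints sit within $10\varepsilon q_n$ of the sub-block boundaries, i.e.\ a block of length $\sim k_j$ (or $\sim q_n-k_j$). Two things fail. First, Lemma~\ref{Le.Uniform} interpolates a degree-$k$ polynomial $P_k$ from $k+1$ phases; a block of length $\sim k_j$ requires $\sim k_j$ candidate phases $\theta_m=\theta+m\alpha$, not the ``$\asymp\varepsilon q_n$ candidate interpolation phases'' you invoke, so the lemma is being mis-sized. Second, and more seriously, for any set $I_1\cup I_2$ of size $\sim k_j$ whose translate-block spans the entire sub-interval, there are pairs $j_1,j_2\in I_1\cup I_2$ with $j_1+j_2\equiv k_j\ (\mathrm{mod}\ q_n)$ (e.g.\ $j_1$ near $0$, $j_2$ near $\ell q_n+k_j$, or both near $\ell q_n+k_j/2$). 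For those pairs Proposition~\ref{prop} (items (1) and (3)) and the construction \eqref{g0} give only $\|2\theta+(j_1+j_2)\alpha\|_{\R/\Z}\gtrsim \eta/q_{n+1}\sim e^{-\beta_n q_n}$, which contributes $\sim\beta_n q_n$ to $\mathrm{Lag}_m$. Your claim that ``item (3) keeps the near-resonance at $k_j$ harmless since $k_j$ lies at an endpoint'' is therefore incorrect: item (3) produces a \emph{small} lower bound, of order $e^{-\beta_n q_n}$, not a harmless one. The bound $\mathrm{Lag}_m\le C\varepsilon q_n$ you assert would then have to be $\mathrm{Lag}_m\le(\beta_n+\varepsilon)q_n$ — exactly the bound that appears in Claim~1 / Theorem~\ref{main3} for the \emph{resonant} site analysis — and the resulting estimate on $|\phi(y)|$ carries an extra $e^{\beta_n q_n}$ that cannot be absorbed into the $3\varepsilon q_n$ slack of \eqref{g11}--\eqref{g12}.

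What the paper actually does is use blocks whose length scales with $d_p\lesssim\min\{|p-\ell q_n|,\,|p-\ell q_n-k_j|,\,10^{-5}\eta q_n\}$, with $I_1$ near the origin and $I_2$ placed next to $p$ on the side away from the nearest resonance. This guarantees that every sum $j_1+j_2=k_j+l_1'q_n+l_2'$ arising in the Lagrange product has $l_2'\ge 1$, so items (2), (4), (5) of Proposition~\ref{prop} give denominators $\ge 1/(4q_n)$ or $\ge\eta/(20q_n)$, and $\mathrm{Lag}_m\le\varepsilon q_n$. The conclusion \eqref{g11}--\eqref{g12} then comes from chaining these small blocks (the cited ``standard block expansion''), not from a single big block. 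Your proposal omits both the adaptive block sizing and the chaining step; without them the argument does not close.
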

\begin{proof}
	Assume $n=2j-1$ first. 
	For any $p\in [\ell q_n,\ell q_n+q_n]$ satisfying  $|p-\ell q_n|\geq  \varepsilon q_n$, $|p-\ell q_n-q_n|\geq   \varepsilon q_n$ and $|p-\ell q_n-k_j|\geq   \varepsilon q_n $, let
	\begin{equation*}
		d_p=\frac{1}{100}\min\{|p-\ell q_n|,|p-\ell q_n-q_n|,|p-\ell q_n- k_j|, 10^{-5} \eta q_n\}
	\end{equation*}
	Let $n_0$ be the  smallest integer such that
	\begin{equation*}
		2 q_{n-n_0} \leq  d_p,
	\end{equation*}
	and  let $s$ be the largest positive integer such that $2sq_{n-n_0}\leq
	d_p$. Notice that $ 2(s+1)q_{n-n_0}\geq d_p $, one has that
	\begin{equation}\label{g14}
		sq_{n-n_0}\geq \frac{1}{4}d_p\geq \frac{1}{400}\varepsilon q_n.
	\end{equation}
	{\bf Case 1}: $p \in [\ell q_n+\varepsilon q_n,\ell q_n+ k_j-\varepsilon q_n]$.

	If $p\leq \ell q_n+\frac{k_j}{2} $,
	we construct intervals
	\begin{equation*}
		I_1=[-2sq_{n-n_0},2s q_{n-n_0}-1],I_2=[p- 2sq_{n-n_0},p-1].
	\end{equation*}
	
	If $p>\ell q_n+\frac{k_j}{2} $, we construct intervals
	\begin{equation*}
		I_1=[- 2sq_{n-n_0} ,2sq_{n-n_0}-1],I_2=[p+1,p+2sq_{n-n_0}].
	\end{equation*}
	{\bf Case 2}: $p \in [\ell q_n+k_j+\varepsilon q_n,\ell q_n+q_n-\varepsilon q_n]$.

	If $p\leq \ell q_n+\frac{k_j}{2} +\frac{1}{2}q_n$,
	we construct intervals
	\begin{equation*}
		I_1=[-2sq_{n-n_0},2s q_{n-n_0}-1],I_2=[p- 2sq_{n-n_0},p-1].
	\end{equation*}
	
	If $p>\ell q_n+\frac{k_j}{2} +\frac{1}{2}q_n$, we construct intervals
	\begin{equation*}
		I_1=[- 2sq_{n-n_0} ,2sq_{n-n_0}-1],I_2=[p+1,p+2sq_{n-n_0}].
	\end{equation*}
	
	By the construction of $I_1$ and $I_2$, one has that for any $j_1,j_2\in I_1\cup I_2$  with $j_1\neq j_2$,
	there exist  $l_1$ and $l_2$ with $|l_1|\leq 100 \frac{b_{n+1}}{q_n}+4$ and $1\leq l_2<q_{n}$, and   $l_1'$ and $l_2'$ with  $|l_1'|\leq 100\frac{b_{n+1}}{q_n}+4$ and $1\leq l_2'<q_{n}$ such that 
	\begin{equation*}
		j_1-j_2=l_1 q_n+l_2,
	\end{equation*}
	and
	\begin{equation*}
		j_1+j_2=k_j+l_1' q_n+l_2'.
	\end{equation*}
	Therefore, by  \eqref{gdc2} and \eqref{gdc5}, we have that 
	that for any $j_1,j_2\in I_1\cup I_2$  with $j_1\neq j_2$,
	\begin{equation}\label{g201}
		||(j_1-j_2)\alpha||_{\R/\Z}\geq \frac{1}{4q_n}
	\end{equation}
	and
	\begin{equation}\label{g202}
		||2\theta+(j_1+j_2)\alpha||_{\R/\Z}\geq \frac{1}{4q_n}.
	\end{equation}

	Assume $n=2j$. 
	
	For any $p\in [\ell q_n,\ell q_n+q_n]$ satisfying  $|p-\ell q_n|\geq  \varepsilon q_n$ and $|p-\ell q_n-q_n|\geq   \varepsilon q_n$, let
	\begin{equation*}
		d_p=\frac{1}{100}\min\{|p-\ell q_n|,|p-\ell q_n-q_n|, 10^{-5} \eta q_n\}
	\end{equation*}
	Let $n_0$ be the  smallest integer such that
	\begin{equation*}
		2 q_{n-n_0} \leq  d_p,
	\end{equation*}
	and  let $s$ be the largest positive integer such that $2sq_{n-n_0}\leq
	d_p$. Notice that $ 2(s+1)q_{n-n_0}> d_p $, one has that
	\begin{equation}\label{g15}
		sq_{n-n_0}\geq \frac{1}{4}d_p\geq \frac{1}{400}\varepsilon q_n.
	\end{equation}
	We construct intervals
	\begin{equation*}
		I_1=[-sq_{n-n_0},s q_{n-n_0}-1],I_2=[p- sq_{n-n_0},p+s q_{n-n_0}-1].
	\end{equation*}

	In this case,  
	one has that for any $j_1,j_2\in I_1\cup I_2$  and $j_1\neq j_2$,
	\begin{equation*}
		j_1-j_2=l_1 q_n+l_2,
	\end{equation*}
	where $|l_1|\leq 100 \frac{b_{n+1}}{q_n}+4$ and $1\leq l_2<q_{n}$. Therefore, 
	by \eqref{gdc5}, we have that 
	\begin{equation}\label{g203}
		||(j_1-j_2)\alpha||_{\R/\Z}\geq \frac{1}{4q_n}.
	\end{equation}
	and by  \eqref{gdc3} and \eqref{gdc4}, we have that
	\begin{equation}\label{g204}
		||2\theta+(j_1+j_2)\alpha||_{\R/\Z}\geq \frac{\eta}{20q_n}.
	\end{equation}

	In all cases,  let $\theta_m=\theta+m\alpha$ for $m\in I_1\cup I_2$.  
	By the small divisor conditions  \eqref{g201}, \eqref{g202}, \eqref{g203} and \eqref{g204},  and modifying  the proof of ~\cite[Lemma 9.9]{aj09}  (or Appendices in ~\cite{jl18} and ~\cite{liuetds20}), we can prove  that
	for any
	$  m\in I_1\cup I_2$, one has  that ${\rm Lag}_m\leq  \varepsilon q_n  $.
	Now Lemma \ref{le2} follows from  standard block expansions (e.g., ~\cite[Lemma 4.1]{jl18} and ~\cite[Lemma 3.4]{jl19}). 
\end{proof}
\section{Resonant sites and proof of Theorem \ref{main2}}\label{S6}
We are going to deal with the resonances $\ell q_n+k_j $ and  $\ell q_n $ first.
For simplicity, we drop the dependence of $n$ and $\varepsilon$ in  notations  $r_{\ell}^{\varepsilon,n}$ and $r_{\ell+\eta}^{\varepsilon,n}$. Denote by 
 $\beta_n=\frac{\ln q_{n+1}}{q_n}$.  Clearly, for large $n$, $\beta_n\leq \beta+\varepsilon$. Since $L>\delta $, by Lemma \ref{le0}, one has that for large $j$, 
 \begin{equation}\label{gnew2}
 	L>\frac{\beta_{2j-1}q_{2j-1}}{k_j},
 \end{equation}
and for large $n$
 \begin{equation}\label{gnew3}
	L>(1-10^{-14})\frac{\beta}{\eta}> (1-10^{-13})\frac{\beta_n}{\eta}.
\end{equation}
\begin{theorem}\label{main3}
	Assume that $n=2j-1$ and $|\ell|\leq 20 \frac{b_{n+1}}{q_n} $. Then 
	\begin{equation}\label{g16}
		r_{\ell+\eta}\leq e^{C\varepsilon  q_n+\beta_n q_n} \left(e ^{ -L k_j} r_{\ell}+  e^{-Lq_n+Lk_j} r_{\ell+1}\right).
	\end{equation}
\end{theorem}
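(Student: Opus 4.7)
The plan is to apply the block resolvent expansion \eqref{g100} at $y = \ell q_n + k_j + r q_n$ with $|r|\le 10\varepsilon$, using an interval of length $\sim q_n$ whose endpoints lie in the $r_\ell$ and $r_{\ell+1}$ windows. Choose $x_1, x_2$ with $|x_1 - 1 - \ell q_n|\le 10\varepsilon q_n$ and $|x_2 + 1 - (\ell + 1)q_n|\le 10\varepsilon q_n$, so that $|\phi(x_1 - 1)|\le r_\ell$ and $|\phi(x_2 + 1)|\le r_{\ell+1}$; set $I = [x_1, x_2]$. Substituting into \eqref{g100} yields
\[
|\phi(y)| \le |P_I|^{-1}\bigl(|P_{[x_1, y - 1]}|\,r_{\ell+1} + |P_{[y + 1, x_2]}|\,r_\ell\bigr).
\]

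The two numerator factors are bounded by \eqref{Numerator1}: since $y - x_1 = k_j + O(\varepsilon q_n)$ and $x_2 - y = q_n - k_j + O(\varepsilon q_n)$, one has $|P_{[x_1, y - 1]}| \le e^{L k_j + C\varepsilon q_n}$ and $|P_{[y + 1, x_2]}| \le e^{L(q_n - k_j) + C\varepsilon q_n}$. So the proof reduces to the lower bound
\[
|P_I| \ge e^{L q_n - \beta_n q_n - C\varepsilon q_n}.
\]

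Establishing this bound is the main obstacle. The strategy is to apply Lemma~\ref{Le.Uniform} with a carefully chosen index set (allowed to be the disjoint union of two blocks of length $\sim q_n$ as in the techniques of \cite{jl18,jl19,liuetds20}), exploiting the $O(\varepsilon q_n)$ freedom in $x_1, x_2$ to ensure the lemma's selected index corresponds to a valid interval. Unpacking \eqref{Def.Uniform} via $|\cos 2\pi a - \cos 2\pi b| = 2|\sin\pi(a - b)\sin\pi(a + b)|$ gives
\[
\mathrm{Lag}_m \le C q_n + \sum_{j\neq m}\ln\frac{1}{\|(m - j)\alpha\|_{\R/\Z}} + \sum_{j\neq m}\ln\frac{1}{\|2\theta + (m + j)\alpha\|_{\R/\Z}}.
\]
The difference sum is controlled by standard Diophantine arguments (Proposition~\ref{prop}(5) plus block refinement), contributing $O(\varepsilon q_n)$. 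For the second (phase) sum, as $m + j$ sweeps a window of length $\sim q_n$, Proposition~\ref{prop}(1)--(4) shows that the only shift yielding a genuinely small denominator corresponds to the phase resonance $k = k_j$, for which \eqref{g0} yields $\|2\theta + k_j\alpha\|_{\R/\Z} \ge \eta/(10 q_{n + 1})$; this single factor contributes at most $\beta_n q_n + C$. Thus $\mathrm{Lag}_m \le \beta_n q_n + C\varepsilon q_n$, producing the desired lower bound on $|P_I|$.

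Combining all estimates gives $|\phi(y)| \le e^{C\varepsilon q_n + \beta_n q_n}\bigl(e^{-L k_j} r_\ell + e^{-L q_n + L k_j} r_{\ell + 1}\bigr)$; taking the supremum over $|r| \le 10\varepsilon$ yields \eqref{g16}. The decisive geometric input is that exactly \emph{one} phase resonance sits inside the sum window of length $\sim q_n$; it is precisely this single appearance of $\beta_n$ (rather than $2\beta_n$), balanced against the decay $e^{-L k_j}$ from $L > \delta(\alpha,\theta)$ and $k_j \approx \eta q_n$, that enables the propagation estimates underlying the disproof of Conjecture~1.
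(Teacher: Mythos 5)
The heart of the paper's argument is that the Lagrange index set in the resonant block around $\ell q_n + k_j$ is made \emph{short}, of size $\approx 2k_j = O(\eta q_n) \ll q_n$ (two blocks $I_1, I_2$ each of length $\approx 2sq_{n-n_0} \approx k_j$). This is what makes Claim 1 work: differences $j_1 - j_2$ with $j_1, j_2 \in I_1 \cup I_2$ satisfy $j_1 - j_2 = l_1 q_n + l_2$ with $l_2 \neq 0$ \emph{always} (so no frequency resonance enters the Lagrange product), and sums $j_1 + j_2 = k_j + l_1 q_n + l_2$ hit $l_2 = 0$ at most once per $m$, giving exactly one phase-resonant factor $\approx e^{-\beta_n q_n}$ and hence $\mathrm{Lag}_m \le (\beta_n + \varepsilon)q_n$. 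Your proposal instead takes windows of length $\sim q_n$; that breaks this bookkeeping. In a block of length $\sim q_n$ the differences $j_1 - j_2$ sweep through multiples of $q_n$, so at least one frequency-resonant factor $\|l_1 q_n\alpha\| \approx e^{-\beta_n q_n}$ enters, and the sums $j_1 + j_2$ sweep a range of length $\sim 2q_n$, so the phase resonance $\equiv k_j \bmod q_n$ is hit (at least) twice. Your claimed bound $\mathrm{Lag}_m \le \beta_n q_n + C\varepsilon q_n$ therefore cannot hold for an index set of that length; you would get at least $\sim 2\beta_n q_n$ or $3\beta_n q_n$ (which is exactly what happens in the paper's Theorems \ref{main4} and \ref{main5}, where full-length $q_n$-windows are used and larger Lagrange losses are tolerated).

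There is a second, structural gap. Lemma \ref{Le.Uniform} does not let you impose the interval $[x_1,x_2]$ in advance: it hands you some $\theta_{j_0}$ with $j_0$ in the index set, and the good determinant is then $P_{[j_0 - \frac{k-1}{2},\, j_0 + \frac{k-1}{2}]}$, an interval \emph{centered at} $j_0$. If $j_0$ lands in the block near the origin ($I_1$), your interval is near $0$, not near $\ell q_n$; the paper disposes of this via the contradiction $|\phi(0)| \le 1/2$ against $\phi(0) = 1$, a step entirely absent from your proposal. And when $j_0 \in I_2$ the endpoints $x_1', x_2'$ land near $\ell q_n \pm O(k_j)$ and $\ell q_n + 2k_j \pm O(k_j)$, \emph{not} near $\ell q_n$ and $(\ell+1)q_n$; they are at ``non-resonant'' positions and must be controlled by the non-resonant propagation Lemma \ref{le2}, which is what produces the intermediate quantities $r_{\ell+\eta}$ and $r_{\ell-1+\eta}$ in \eqref{g22}. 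You then need the crude one-step bound \eqref{g500} (yielding $r_{\ell-1+\eta} \le e^{Lq_n - Lk_j + C\varepsilon q_n}r_\ell$) and the absorption of the $r_{\ell+\eta}$ self-term using $L > \beta_n q_n / k_j$ to finally reach \eqref{g16}. None of this machinery — Lemma \ref{le2}, the $r_{\ell-1+\eta}$ estimate, the self-term absorption, the $j_0 \in I_1$ contradiction — appears in your proposal, and the ``$O(\varepsilon q_n)$ freedom in $x_1,x_2$'' is far too small to replace it. Your final formula is correct and your intuition that ``exactly one $\beta_n$ loss'' is the key is sound, but the mechanism is the shortness of the resonant window (length $\sim k_j$, so small that no frequency resonance and only one phase resonance fit inside it), not any feature of a $q_n$-length window.
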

\begin{proof}
	Take any $p$ with $|p-(\ell q_n+k_j)|\leq 10\varepsilon q_n$   into consideration. 
	Let $n_0$ be the least positive integer such that
	\begin{equation*}
		q_{n-n_0}\leq  \frac{ \varepsilon}{2}  \left(\frac{k_j}{2}-2\varepsilon q_n \right) .
	\end{equation*}
	Let $s$ be the
	largest positive integer such that $sq_{n-n_0}\leq \frac{k_j}{2}-2 \varepsilon q_n $.
	By the fact that $(s+1)q_{n-n_0}\geq \frac{k_j}{2}-2\varepsilon q_n $, one has 
	\begin{equation*}
		s\geq \frac{1}{\varepsilon} 
	\end{equation*}
	and
	\begin{equation}\label{2sq}
	 \frac{k_j}{2}-3\varepsilon q_n \leq sq_{n-n_0}\leq \frac{k_j}{2}-2\varepsilon q_n.
	\end{equation}
	
	Construct intervals
	\begin{equation*}
		I_1=[-sq_{n-n_0},sq_{n-n_0}-1], I_2=[\ell q_n+k_j-sq_{n-n_0},\ell q_n+k_j+sq_{n-n_0}-1].
	\end{equation*}

	Let $\theta_m=\theta+m\alpha$ for $m\in I_1\cup I_2$. The set $\{\theta_m\}_{m\in I_1\cup I_2}$
	consists of $4sq_{n-n_0}$ elements. Let $k=4sq_{n-n_0}-1$.
	By  modifying  the proof of ~\cite[Lemma 9.9]{aj09} and   ~\cite[Lemma 4.1]{lyjfa} (or Appendices in ~\cite{jl18} and ~\cite{liuetds20}), we can prove the claim (Claim 1):
	for any
	$  m\in I_1\cup I_2$, one has ${\rm Lag}_m\leq  (\beta_n +\varepsilon) q_n  $.
	For  convenience, we  include a proof in Section \ref{Sclaim}.
	
	Applying   Lemma  \ref{Le.Uniform}, there exists some $j_0$ with  $j_0\in I_1\cup I_2$
	such that
	\begin{equation}\label{g121}
		\left|P_k\left(\theta_{j_0}-\frac{k-1}{2}\alpha\right)\right| \geq e^{kL-(\beta_n +\varepsilon)q_n}.
	\end{equation}

	First, assume $j_0\in I_2$.
	
	Set $I=[j_0-2sq_{n-n_0}+1,j_0+2sq_{n-n_0}-1]=[x_1,x_2]$.  By \eqref{g103},
	\begin{equation}\label{IIIcase1}
		|\phi(p)|  \leq \sum_{i=1,2} e^{\beta_n q_n+C\varepsilon q_n}|\phi(x_i^{\prime})|e^{-L|p-x_i| },
	\end{equation}
	where $x_1^{\prime}=x_1-1 $ and $x_2^{\prime}=x_2+1 $.
	
	For simplicity,	we are not going to  make the difference between $a\in\Z$ and $\tilde{a}\in\Z$ if  $|a-\tilde{a}|\leq 100\varepsilon q_n$. 
	Clearly, 
	\begin{equation}\label{g502}
		x_1^\prime\in \left[\ell q_n-\frac{1}{2}k_j, \ell q_n+\frac{1}{2}k_j\right], x_2^\prime\in\left[\ell q_n+\frac{3}{2}k_j, \ell q_n+\frac{5}{2}k_j\right].
	\end{equation}
	By Lemma \ref{le2}, one has that 
	\begin{equation}\label{g19}
		|\phi(x_2^\prime)|\leq r_{\ell+\eta}e^{C\varepsilon q_n-L|x_2-\ell q_n-k_j|}+r_{\ell+1}e^{C\varepsilon q_n-L|\ell q_n+ q_n-x_2|}.
	\end{equation}
	
	By Lemma \ref{le2} again, one has that 
	for any  $x_1^\prime\in [\ell q_n, \ell q_n+\frac{1}{2}k_j]$,  
	\begin{equation}\label{g20}
		|\phi(x_1^\prime)|\leq r_{\ell+\eta}e^{C\varepsilon q_n-L|\ell q_n+k_j-x_1|}+r_{\ell}e^{C\varepsilon q_n-L|x_1-\ell q_n|},
	\end{equation}
	and for any $x_1^\prime\in [\ell q_n-\frac{1}{2}k_j, \ell q_n]$,   
	\begin{equation}\label{g21}
		|\phi(x_1^\prime)|\leq r_{\ell-1+\eta}e^{C\varepsilon q_n-L|x_1-\ell q_n+q_n-k_j|}+r_{\ell}e^{C\varepsilon q_n-L|\ell q_n-x_1|}.
	\end{equation}
	By \eqref{IIIcase1}-\eqref{g21}, one has 
	\begin{eqnarray}
		|\phi(p)| &\leq& e^{C\varepsilon q_n-L k_j+\beta_n q_n} (r_{\ell+\eta}+r_{\ell})+e^{C\varepsilon q_n-L   q_n+\beta_n q_n}r_{\ell -1+\eta}
		\nonumber \\  &&+ e^{C\varepsilon q_n-L q_n+\beta_n q_n+Lk_j} r_{\ell+1}.\label{g27}
	\end{eqnarray}
	Therefore, we have
	\begin{equation}\label{g22}
		r_{\ell+\eta}\leq e ^{C\varepsilon q_n-L k_j+\beta_n q_n} (r_{\ell+\eta}+r_{\ell})+e^{C\varepsilon q_n-L   q_n+\beta_n q_n}r_{\ell -1+\eta}+ e^{C\varepsilon q_n-L q_n+\beta_n q_n+Lk_j} r_{\ell+1}.
	\end{equation}
	Since $L>\delta(\alpha,\theta)>\frac{\beta_nq_n}{k_j}$ (by \eqref{gnew2}), one has 
	\begin{equation}\label{gj161}
		e ^{C\varepsilon q_n-L k_j+\beta_n q_n} \leq \frac{1}{2}.
	\end{equation}
By \eqref{g22} and  \eqref{gj161}, we have that 
	\begin{equation}\label{g23}
		r_{\ell+\eta}\leq e ^{C\varepsilon q_n-L k_j+\beta_n q_n} r_{\ell}+e^{C\varepsilon q_n-L   q_n+\beta_n q_n}r_{\ell -1+\eta}+ e^{C\varepsilon q_n-L q_n+\beta_n q_n+Lk_j} r_{\ell+1}
	\end{equation}
	By \eqref{g500}, one has 
	\begin{equation}\label{g24}
		r_{\ell-1+\eta} \leq e^{C\varepsilon q_n} e^{Lq_n-Lk_j} r_{\ell}.
	\end{equation}
	By \eqref{g23} and \eqref{g24},
	one has that 
	\begin{equation}\label{g25}
		r_{\ell+\eta}\leq e^{C\varepsilon q_n+\beta_nq_n} (e ^{ -Lk_j} r_{\ell}+  e^{-L q_n+Lk_j} r_{\ell+1}).
	\end{equation}
	
	Thus in order to prove the  theorem, it suffices to exclude the case $j_0\in I_1$.
	
	Suppose  $j_0\in I_1$.  
	Following the proof of \eqref{g27}, we obtain  (move $-\ell q_n-k_j$ units in \eqref{g27})
	\begin{equation*}
		|\phi(0)|\leq \frac{1}{2}.
	\end{equation*}
	This  contradicts  $ \phi(0)=1$.
\end{proof}

\begin{theorem}\label{main4}
	Assume that $n=2j-1$  and $0<|\ell|\leq 20 \frac{b_{n+1}}{q_n} $. Then 
	\begin{equation}\label{g18}
		r_{\ell }\leq   e^{C\varepsilon q_n+3\beta_n q_n}\left( e^{-L q_n}r_{\ell +1}+ e^{ -Lq_n+Lk_j}r_{\ell -1+\eta}\right).
	\end{equation}
\end{theorem}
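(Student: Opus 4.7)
The plan is to mimic the two-box Lagrange interpolation argument of Theorem \ref{main3}, but with boxes centered at the origin and at the frequency-resonant site $\ell q_n$ (rather than at the phase-resonant site $\ell q_n+k_j$). Fix $p$ with $|p-\ell q_n|\le 10\varepsilon q_n$, pick $n_0$ and $s$ so that $sq_{n-n_0}\approx \tfrac12 q_n - 2\varepsilon q_n$, and form
\[
I_1=[-sq_{n-n_0},\,sq_{n-n_0}-1],\qquad I_2=[\ell q_n - sq_{n-n_0},\,\ell q_n + sq_{n-n_0}-1].
\]
Set $k+1=4sq_{n-n_0}$ and $\theta_m=\theta+m\alpha$ for $m\in I_1\cup I_2$.

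The technical heart of the argument is the Lagrange bound
\[
{\rm Lag}_m\le (3\beta_n+\varepsilon)q_n\qquad\text{for every }m\in I_1\cup I_2,
\]
which I would state as a separate Claim (parallel to Claim 1 used in Theorem \ref{main3}) and defer to Section \ref{Sclaim}. The inputs are the frequency small-divisor bound $\|(j_1-j_2)\alpha\|_{\R/\Z}\ge 1/(4q_n)$, supplied by part (5) of Proposition \ref{prop} since $|j_1-j_2|\le q_n+2sq_{n-n_0}$ is much smaller than $q_{n+1}$, together with the phase-sum bounds in parts (1)--(4) of Proposition \ref{prop}. The factor $3$ (versus the $1$ in Theorem \ref{main3}) arises because $I_1\cup I_2$ now spans a window of width $\sim 2q_n$ rather than $\sim 2k_j\approx 2\eta q_n$, so up to three disjoint clusters where $\|2\theta+(j_1+j_2)\alpha\|_{\R/\Z}$ can be as small as $\eta/(20q_n)$ may enter the product; each contributes at most $\beta_n q_n$ to ${\rm Lag}_m$.

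Once the Lagrange estimate is in hand, Lemma \ref{Le.Uniform} yields some $j_0\in I_1\cup I_2$ with $|P_k(\theta_{j_0}-\tfrac{k-1}{2}\alpha)|\ge (k+1)^{-1}e^{kL-(3\beta_n+\varepsilon)q_n}$. The case $j_0\in I_1$ is excluded exactly as in Theorem \ref{main3}: shifting the block expansion \eqref{g103} back to the origin would force $|\phi(0)|\le \tfrac12$, contradicting $\phi(0)=1$. Hence $j_0\in I_2$, and \eqref{g103} applied with $I=[x_1,x_2]=[j_0-2sq_{n-n_0}+1,\,j_0+2sq_{n-n_0}-1]$ gives
\[
|\phi(p)|\le e^{3\beta_n q_n+C\varepsilon q_n}\bigl(|\phi(x_1')|e^{-L|p-x_1|}+|\phi(x_2')|e^{-L|p-x_2|}\bigr).
\]
Here $x_1'$ sits near $(\ell-1)q_n+\tfrac12 q_n$, so it lies between the resonances $(\ell-1)q_n+k_j$ and $\ell q_n$, while $x_2'$ sits near $(\ell+1)q_n-\tfrac12 q_n$, between $\ell q_n+k_j$ and $(\ell+1)q_n$. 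Invoking Lemma \ref{le2} to express each $|\phi(x_i')|$ in terms of the neighbouring $r$-quantities, then using \eqref{g500} as in \eqref{g24} to convert any residual $r_{\ell+\eta}$ into $e^{Lq_n-Lk_j}r_\ell$, and finally absorbing the resulting $r_\ell$-term on the right into the left-hand side via $L>\beta_n/\eta$ (from \eqref{gnew3}), produces the advertised bound \eqref{g18}.

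The main obstacle is precisely the Lagrange estimate ${\rm Lag}_m\le 3\beta_n q_n$: three distinct small-divisor sum-resonance clusters must be tracked simultaneously in the two-box setting, and their interplay with the frequency differences must be handled using the arithmetic of Proposition \ref{prop}. Everything else --- the contradiction from $\phi(0)=1$, the block-resolvent expansion, and the $r_\cdot$-bookkeeping --- is a direct adaptation of the mechanics already deployed in Theorem \ref{main3}.
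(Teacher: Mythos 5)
Your overall architecture matches the paper's: two $q_n$-length boxes around $0$ and $\ell q_n$, a Lagrange bound ${\rm Lag}_m\le(3\beta_n+\varepsilon)q_n$ deferred to a Claim, exclusion of $j_0\in I_1$ via $\phi(0)=1$, and then the block expansion with $r_{\cdot}$-bookkeeping and absorption using $L>\beta_n/\eta$. The paper takes $I_1,I_2$ of length exactly $q_n$ rather than $2sq_{n-n_0}$; that is a cosmetic difference.

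However, your explanation of the $3\beta_n q_n$ in the Claim is wrong in a way worth flagging, because that estimate is the heart of the argument. You attribute it to ``up to three clusters where $\|2\theta+(j_1+j_2)\alpha\|$ can be as small as $\eta/(20q_n)$''. But a divisor of size $\eta/(20q_n)$ only contributes $\sim\ln q_n$ to ${\rm Lag}_m$, which is negligible next to $\beta_n q_n$; such polynomially small divisors are absorbed into the $C\varepsilon q_n$ slack (exactly as in Lemma \ref{le2}). The genuine $\beta_n q_n$-size contributions come from the \emph{exact} resonances $l_2=0$ in Proposition \ref{prop}: when $j_1-j_2=l_1q_n$ one only has $\|(j_1-j_2)\alpha\|\ge 1/(2q_{n+1})\approx e^{-\beta_n q_n}$, and when $j_1+j_2=k_j+l_1q_n$ one only has $\|2\theta+(j_1+j_2)\alpha\|\ge \eta/(10q_{n+1})\approx e^{-\beta_n q_n}$. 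The paper's Claim 2 shows that for each $j_1\in I_1\cup I_2$ there is at most one $j_2$ hitting the difference degeneration and at most two hitting the sum degeneration; that count of $1+2=3$ is where $3\beta_n q_n$ comes from. In Theorem \ref{main3} the two boxes have half-width $<k_j/2$, so the difference degeneration cannot occur and the sum degeneration occurs at most once, giving the smaller $\beta_n q_n$. If you were to write out the Claim you would find the correct mechanism, but as stated your heuristic mislocates the resonances and would leave a reader thinking the argument is about $1/q_n$-size divisors. (Also, a small bookkeeping slip: the $e^{Lq_n-Lk_j}$ conversion \eqref{g24} applies to $r_{\ell-1+\eta}\to r_\ell$; the conversion $r_{\ell+\eta}\to r_\ell$ only costs $e^{Lk_j}$ as in \eqref{1g24}.)
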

\begin{proof}
	Take any $p$ with $|p-\ell q_n|\leq  10\varepsilon q_n$  into consideration. 
	%
	%
	%
	
	Construct intervals
	\begin{equation*}
		I_1=\left[- \left\lfloor \frac{q_n}{2}\right \rfloor, q_n- \left\lfloor \frac{q_n}{2} \right\rfloor  -1\right], 
	\end{equation*}
	and
	\begin{equation*}
		I_2=\left[ \ell q_n- \left\lfloor \frac{q_n}{2} \right\rfloor,( \ell +1)q_n- \left\lfloor \frac{q_n}{2} \right\rfloor -1 \right].
	\end{equation*}
	
	Let $\theta_m=\theta+m\alpha$ for $m\in I_1\cup I_2$. The set $\{\theta_m\}_{m\in I_1\cup I_2}$
	consists of $2q_{n}$ elements. Let $k=2q_{n}-1$.
	In this case, similar to Claim 1, we can prove the claim (Claim 2):
	for any
	$  m\in I_1\cup I_2$, one has ${\rm Lag}_m\leq  (3\beta_n+\varepsilon) q_n  $.
	For convenience, we  include a proof in  Section \ref{Sclaim}.
	
	Applying   Lemma  \ref{Le.Uniform}, there exists some $j_0$ with  $j_0\in I_1\cup I_2$
	such that
	\begin{equation}\label{1g121}
		\left|P_k\left(\theta_{j_0}-\frac{k-1}{2}\alpha\right)\right| \geq e^{kL-(3\beta_n +\varepsilon)q_n}.
	\end{equation}

	First, assume $j_0\in I_2$.
	
	Set $I=[j_0-q_n+1,j_0+q_n-1]=[x_1,x_2]$.   By    \eqref{g103}, one has
	\begin{equation}\label{1IIIcase1}
		|\phi(p)|  \leq \sum_{i=1,2} e^{3\beta_n q_n+C\varepsilon q_n}|\phi(x_i^{\prime})|e^{-L|p-x_i| },
	\end{equation}
	where $x_1^{\prime}=x_1-1 $ and $x_2^{\prime}=x_2+1 $.
	
	Clearly, 
	\begin{equation}\label{g503}
		x_1^\prime\in \left[\ell q_n -\frac{3}{2} q_n, \ell q_n-\frac{1}{2} q_n\right], x_2^\prime\in\left[\ell q_n+\frac{1}{2}  q_n, \ell q_n+\frac{3}{2}q_n\right].
	\end{equation}
	
	By Lemma \ref{le2}, 
	\begin{itemize}
		\item  for $x_2^\prime\in [ \ell q_n+\frac{1}{2} q_n,\ell q_n+q_n]$,   
		\begin{equation}\label{1g20}
			|\phi(x_2^\prime)|\leq r_{\ell+\eta}e^{C\varepsilon q_n-L|x_2-\ell q_n-k_j|}+r_{\ell+1}e^{C\varepsilon q_n-L|\ell q_n+q_n-x_2|},
		\end{equation}
		\item 	for $x_2^\prime\in [\ell q_n+q_n, \ell q_n+q_n+k_j]$,   
		\begin{equation}\label{1g21}
			|\phi(x_2^\prime)|\leq r_{\ell+1+\eta}e^{C\varepsilon q_n-L|\ell q_n+q_n+k_j-x_2|}+r_{\ell+1}e^{C\varepsilon q_n-L|x_2-q_n-\ell q_n|}.
		\end{equation}
		\item for
		$x_2^\prime\in [ \ell q_n+q_n+k_j,\ell q_n+\frac{3}{2}q_n]$,
		\begin{equation}\label{11g21}
			|\phi(x_2^\prime)|\leq r_{\ell+1+\eta}e^{C\varepsilon q_n-L|x_2-\ell q_n-q_n-k_j|}+r_{\ell+2}e^{C\varepsilon q_n-L|\ell q_n+2q_n-x_2|}.
		\end{equation}
	\end{itemize}

	By Lemma \ref{le2} again, one has
	\begin{itemize}
		\item  for $x_1^\prime\in [ \ell q_n-q_n+ k_j,\ell q_n-\frac{1}{2} q_n]$,   
		\begin{equation}\label{11g20}
			|\phi(x_1^\prime)|\leq r_{\ell-1+\eta}e^{C\varepsilon q_n-L|x_1-\ell q_n+q_n-k_j|}+r_{\ell}e^{C\varepsilon q_n-L|\ell q_n-x_1|},
		\end{equation}
		\item 	for $x_1^\prime\in [\ell q_n-q_n, \ell q_n -q_n+k_j]$,   one has
		\begin{equation}\label{111g21}
			|\phi(x_1^\prime)|\leq r_{\ell-1+\eta}e^{C\varepsilon q_n-L|\ell q_n -q_n+k_j-x_1|}+r_{\ell-1}e^{C\varepsilon q_n-L|x_1-\ell q_n+q_n|}.
		\end{equation}
		\item for
		$x_1^\prime\in [\ell q_n-\frac{3}{2} q_n,\ell q_n-q_n]$   one has
		\begin{equation}\label{1111g21}
			|\phi(x_1^\prime)|\leq r_{\ell-2+\eta}e^{C\varepsilon q_n-L|x_1-\ell q_n+2q_n-k_j|}+r_{\ell-1}e^{C\varepsilon q_n-L|\ell q_n-q_n-x_1|}.
		\end{equation}
	\end{itemize}

	By \eqref{1IIIcase1}-\eqref{1111g21}, one has 
	\begin{eqnarray}
		|\phi(p)| &\leq& e^{C\varepsilon q_n-L q_n+3\beta_n  q_n}  r_{\ell}+e^{C\varepsilon q_n-L q_n+Lk_j+3\beta_n  q_n}  
		r_{\ell+\eta}
		\nonumber \\  &&+e^{C\varepsilon q_n-L   q_n-Lk_j+ 3\beta_n q_n}r_{\ell +1+\eta}+e^{C\varepsilon q_n-L   q_n+ 3\beta_n q_n}r_{\ell +1}	\nonumber\\
		&&+e^{C\varepsilon q_n-2L   q_n+ 3\beta_n q_n}r_{\ell +2} +e^{C\varepsilon q_n-L   q_n+Lk_j+ 3\beta_n q_n}r_{\ell -1+\eta} 	\nonumber \\
		&&+e^{C\varepsilon q_n-L   q_n+ 3\beta_n q_n}r_{\ell -1}+e^{C\varepsilon q_n-2L   q_n+Lk_j+ 3\beta_nq_n}r_{\ell -2+\eta}.
		\label{1g27}
	\end{eqnarray}
	Therefore, we have
	\begin{eqnarray}
		r_{\ell}&\leq& e^{C\varepsilon q_n-L q_n+3\beta_n  q_n}  r_{\ell}+e^{C\varepsilon q_n-L q_n+Lk_j+3\beta_n  q_n}  
		r_{\ell+\eta}
		\nonumber \\  &&+e^{C\varepsilon q_n-L   q_n-Lk_j+ 3\beta_n q_n}r_{\ell +1+\eta}+e^{C\varepsilon q_n-L   q_n+ 3\beta_n q_n}r_{\ell +1}	\nonumber\\
		&&+e^{C\varepsilon q_n-2L   q_n+ 3\beta_n q_n}r_{\ell +2} +e^{C\varepsilon q_n-L   q_n+Lk_j+ 3\beta_n q_n}r_{\ell -1+\eta} 	\nonumber \\
		&&+e^{C\varepsilon q_n-L   q_n+ 3\beta_n q_n}r_{\ell -1}+e^{C\varepsilon q_n-2L   q_n+Lk_j+ 3\beta_n q_n}r_{\ell -2+\eta}.
		\label{11g27}
	\end{eqnarray}
	
	By \eqref{g500}, one has 
	\begin{equation}\label{1g24}
		r_{\ell+\eta}\leq e^{C\varepsilon q_n} e^{L k_j} r_{\ell},
		r_{\ell+1+\eta} \leq e^{C\varepsilon q_n} e^{L k_j} r_{\ell+1},  r_{\ell+2} \leq e^{C\varepsilon q_n} e^{L q_n} r_{\ell+1},
	\end{equation}
	and 
	\begin{equation}\label{11g24}
		r_{\ell-1} \leq e^{C\varepsilon q_n} e^{Lk_j} r_{\ell-1+\eta},  r_{\ell-2+\eta} \leq e^{C\varepsilon q_n} e^{L q_n} r_{\ell-1+\eta},
	\end{equation}
	By 	 \eqref{11g27},  \eqref{1g24} and \eqref{11g24}, we have
	\begin{eqnarray}
		r_{\ell} &\leq& e^{C\varepsilon q_n-L q_n+2Lk_j+3\beta_n  q_n}  r_{\ell}+e^{C\varepsilon q_n-L   q_n+ 3\beta_n q_n}r_{\ell +1}
		\nonumber \\
		&&+e^{C\varepsilon q_n-L   q_n+Lk_j+ 3\beta_n q_n}r_{\ell -1+\eta}  .
		\label{101g27}
	\end{eqnarray}
By \eqref{gnew3}, one has 
	\begin{equation}\label{g00}
		e^{C\varepsilon q_n-L q_n+2L k_j+3\beta_n  q_n} \leq \frac{1}{2}.
	\end{equation}
	
	By \eqref{101g27} and \eqref{g00}, 
	one has
	\begin{equation}\label{1g25}
		r_{\ell }\leq    e^{C\varepsilon q_n-L   q_n+ 3\beta_n q_n}r_{\ell +1}+ e^{C\varepsilon q_n-L   q_n+Lk_j+ 3\beta_n q_n}r_{\ell -1+\eta}.
	\end{equation}
	
	Thus to prove the theorem, it suffices to exclude the case $j_0\in I_1$.
	Suppose  $j_0\in I_1$. 
	Following the proof of \eqref{1g25}, we have
	\begin{equation*}
		|\phi(0)|\leq \frac{1}{2}.
	\end{equation*}
	This  contradicts  $ \phi(0)=1$.
\end{proof}

\begin{theorem}\label{main5}
	Assume that $n=2j$ is  even and $0<|\ell|\leq 20 \frac{b_{n+1}}{q_n} $. Then 
	\begin{equation}\label{1g18}
		r_{\ell }\leq e^{C\varepsilon q_n-L q_n+\beta_n  q_n} \left (r_{\ell-1} +r_{\ell+1}\right).
	\end{equation}
\end{theorem}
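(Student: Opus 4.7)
The plan is to adapt the construction used in Theorem \ref{main4}, exploiting the fact that for even $n=2j$ there is no phase resonance of the form $\ell q_n+k_j$ sitting inside the block $[\ell q_n,(\ell+1)q_n]$. Consequently the Lagrange interpolation loss will be only $\beta_n q_n$ instead of the $3\beta_n q_n$ incurred in the odd case, and the single resonant site $\ell q_n$ can be treated by a two-block symmetrization around $0$ and $\ell q_n$.

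Concretely, take any $p$ with $|p-\ell q_n|\leq 10\varepsilon q_n$ and construct
\begin{equation*}
I_1=\left[-\left\lfloor \tfrac{q_n}{2}\right\rfloor,\, q_n-\left\lfloor \tfrac{q_n}{2}\right\rfloor-1\right],\qquad I_2=\left[\ell q_n-\left\lfloor \tfrac{q_n}{2}\right\rfloor,\,(\ell+1)q_n-\left\lfloor \tfrac{q_n}{2}\right\rfloor-1\right],
\end{equation*}
so that $k:=2q_n-1$. The first step is to verify the analogue of Claim 2: for every $m\in I_1\cup I_2$,
\begin{equation*}
{\rm Lag}_m\leq (\beta_n+\varepsilon)q_n.
\end{equation*}
For any two distinct $j_1,j_2\in I_1\cup I_2$ we can write $j_1\pm j_2$ as $l_1 q_n+l_2$ (resp. $k_j'+l_1 q_n+l_2$ after accounting for $2\theta$), with $|l_1|\leq 100\frac{b_{n+1}}{q_n}+4$ and $1\leq l_2<q_n$; since $n$ is now even, parts (3)--(5) of Proposition \ref{prop} furnish the small denominator bounds $\|( j_1-j_2)\alpha\|_{\R/\Z}\geq \frac{1}{4q_n}$ and $\|2\theta+(j_1+j_2)\alpha\|_{\R/\Z}\geq \frac{\eta}{20 q_n}$, which are strictly better than in the odd case because there is no $k_j$ shift to absorb. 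Plugging these into the standard Lagrange interpolation argument (as in ~\cite[Lemma 9.9]{aj09} or the appendices of ~\cite{jl18,liuetds20}) yields the claimed bound with $\beta_n$ in place of $3\beta_n$.

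Next, apply Lemma \ref{Le.Uniform} to obtain $j_0\in I_1\cup I_2$ with $|P_k(\theta_{j_0}-\tfrac{k-1}{2}\alpha)|\geq e^{kL-(\beta_n+\varepsilon)q_n}$. Assume $j_0\in I_2$ and set $I=[j_0-q_n+1,j_0+q_n-1]=[x_1,x_2]$. Then $x_1'\in[\ell q_n-\tfrac{3}{2}q_n,\ell q_n-\tfrac{1}{2}q_n]$ and $x_2'\in[\ell q_n+\tfrac{1}{2}q_n,\ell q_n+\tfrac{3}{2}q_n]$ lie in non-resonant regions for the even case, so Lemma \ref{le2} (specifically \eqref{g12}) bounds $|\phi(x_i')|$ in terms of $r_{\ell-2},r_{\ell-1},r_\ell,r_{\ell+1},r_{\ell+2}$, after which \eqref{g500} swallows $r_{\ell\pm 2}$ into $e^{Lq_n+C\varepsilon q_n}r_{\ell\pm 1}$. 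Feeding everything into \eqref{g103} yields
\begin{equation*}
r_{\ell}\leq e^{C\varepsilon q_n-Lq_n+\beta_n q_n}r_\ell+e^{C\varepsilon q_n-Lq_n+\beta_n q_n}(r_{\ell-1}+r_{\ell+1}).
\end{equation*}
Since $L>\delta(\alpha,\theta)\geq (1-10^{-13})\beta_n/\eta\gg \beta_n$ by \eqref{gnew3} and $\eta\leq 10^{-2}$, the self-coefficient of $r_\ell$ is at most $\tfrac{1}{2}$ and may be absorbed to the left, giving \eqref{1g18}.

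The case $j_0\in I_1$ is excluded by repeating the argument centered at the origin: shifting by $-\ell q_n$ one obtains $|\phi(0)|\leq \tfrac{1}{2}$, contradicting $\phi(0)=1$. The main technical point to watch is the Lagrange bound in the first paragraph, because the small denominator estimates depend on both the frequency-resonance structure (parts (3),(5) of Proposition \ref{prop}) and the even-$n$ phase estimate (part (4)); all others are direct specializations of Theorem \ref{main4}'s scheme.
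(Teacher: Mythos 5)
Your proposal follows the paper's own proof essentially step for step: same intervals $I_1,I_2$, same invocation of the Lagrange interpolation lemma with loss $(\beta_n+\varepsilon)q_n$, same block expansion via \eqref{g103} and Lemma \ref{le2}, same absorption of $r_\ell$ and $r_{\ell\pm 2}$ via \eqref{g500}, and the same contradiction argument for $j_0\in I_1$. So the structure is right and the conclusion is correct.

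There is, however, one concrete misstep in the small-denominator accounting. You claim that for all distinct $j_1,j_2\in I_1\cup I_2$ one can write $j_1-j_2=l_1q_n+l_2$ with $1\leq l_2<q_n$, and deduce $\|(j_1-j_2)\alpha\|_{\R/\Z}\geq \frac{1}{4q_n}$ for every pair. That is false: taking $j_1\in I_1$ and $j_2=j_1+\ell q_n\in I_2$ gives $l_2=0$, and then the best you have is $\|(j_1-j_2)\alpha\|_{\R/\Z}=\|\ell q_n\alpha\|_{\R/\Z}\geq \frac{1}{2q_{n+1}}\geq e^{-(\beta_n+\varepsilon)q_n}$ (this is \eqref{211Appenoct2} in the paper's Claim 3). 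This is exactly the \emph{source} of the $\beta_n q_n$ loss in ${\rm Lag}_m$. If your stated estimates (polynomial in $q_n$ for both $j_1-j_2$ and $j_1+j_2$) were correct, the standard interpolation argument would give ${\rm Lag}_m\leq \varepsilon q_n$, not $(\beta_n+\varepsilon)q_n$; your ``plugging these in yields $\beta_n$ in place of $3\beta_n$'' does not follow from the estimates you wrote down. The correct account, as in Claim 3, is: phase sums $j_1+j_2$ are uniformly $\geq\frac{\eta}{20q_n}$ because $n$ is even (Proposition \ref{prop}(3)--(4)), the frequency differences with $l_2\neq 0$ are $\geq\frac{1}{4q_n}$ (Proposition \ref{prop}(5)), and for each $m$ there is \emph{at most one} exceptional pair with $l_2=0$ contributing an $e^{\beta_n q_n}$ factor, whence ${\rm Lag}_m\leq(\beta_n+\varepsilon)q_n$. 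With this correction the rest of your argument goes through unchanged.
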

\begin{proof}
	Take any $p$ with $|p-\ell q_n|\leq  10\varepsilon q_n$  into consideration.

	Construct intervals
	\begin{equation*}
		I_1=\left[- \left\lfloor \frac{q_n}{2}\right \rfloor, q_n- \left\lfloor \frac{q_n}{2} \right\rfloor  -1\right], 
	\end{equation*}
	and
	\begin{equation*}
		I_2=\left[ \ell q_n- \left\lfloor \frac{q_n}{2} \right\rfloor,( \ell +1)q_n- \left\lfloor \frac{q_n}{2} \right\rfloor -1 \right].
	\end{equation*}
	
	Let $\theta_m=\theta+m\alpha$ for $m\in I_1\cup I_2$. The set $\{\theta_m\}_{m\in I_1\cup I_2}$
	consists of $2q_{n}$ elements. Let $k=2q_{n}-1$.
	In this case, similar to Claims 1 and 2,  we can prove the claim (Claim 3):
	for any
	$  m\in I_1\cup I_2$, one has ${\rm Lag}_m\leq  (\beta_n +\varepsilon) q_n  $.
	For convenience, we  include a proof  in  Section \ref{Sclaim}.

	Applying   Lemma  \ref{Le.Uniform}, there exists some $j_0$ with  $j_0\in I_1\cup I_2$
	such that
	\begin{equation}\label{11g121}
		\left|P_k\left(\theta_{j_0}-\frac{k-1}{2}\alpha\right)\right| \geq e^{kL-(\beta_n +\varepsilon)q_n}.
	\end{equation}

	First assume $j_0\in I_2$.
	
	Set $I=[j_0-q_n+1,j_0+q_n-1]=[x_1,x_2]$.   By    \eqref{g103}, one has that
	\begin{equation}\label{11IIIcase1}
		|\phi(p)|  \leq \sum_{i=1,2} e^{\beta_n q_n+C\varepsilon q_n}|\phi(x_i^{\prime})|e^{-L|p-x_i| },
	\end{equation}
	where $x_1^{\prime}=x_1-1 $ and $x_2^{\prime}=x_2+1 $.
	
	Clearly, 
	\begin{equation}\label{9504}
		x_1^\prime\in \left[\ell q_n-\frac{3}{2} q_n, \ell q_n -\frac{1}{2} q_n\right], 
		x_2^\prime\in\left[\ell q_n+\frac{1}{2}  q_n,  \ell q_n+\frac{3}{2}q_n\right].
	\end{equation}
	By Lemma \ref{le2}, one has that
	\begin{itemize}
		\item  for $x_2^\prime\in [ \ell q_n+\frac{1}{2} q_n,\ell q_n+q_n]$,   
		\begin{equation}\label{111g20}
			|\phi(x_2^\prime)|\leq r_{\ell}e^{C\varepsilon q_n-L|x_2-\ell q_n|}+r_{\ell+1}e^{C\varepsilon q_n-L|\ell q_n+q_n-x_2|},
		\end{equation}
		\item for
		$x_2^\prime\in [ \ell q_n+q_n,\ell q_n+\frac{3}{2}q_n]$   one has
		\begin{equation}\label{112g21}
			|\phi(x_2^\prime)|\leq r_{\ell+1}e^{C\varepsilon q_n-L|x_2-\ell q_n-q_n|}+r_{\ell+2}e^{C\varepsilon q_n-L|\ell q_n+2q_n-x_2|}.
		\end{equation}
	\end{itemize}

	By Lemma \ref{le2} again, one has that
	\begin{itemize}
		\item  for $x_1^\prime\in [ \ell q_n-q_n,\ell q_n-\frac{1}{2}q_n]$,   
		\begin{equation}\label{112g20}
			|\phi(x_1^\prime)|\leq r_{\ell-1}e^{C\varepsilon q_n-L|x_1-\ell q_n+q_n|}+r_{\ell}e^{C\varepsilon q_n-L|\ell q_n-x_1|},
		\end{equation}
		\item 
	for	$x_1^\prime\in [ \ell q_n-\frac{3}{2}q_n,\ell q_n-q_n]$,
		\begin{equation}\label{11112g21}
			|\phi(x_1^\prime)|\leq r_{\ell-2}e^{C\varepsilon q_n-L|x_1-\ell q_n+2q_n|}+r_{\ell-1}e^{C\varepsilon q_n-L|\ell q_n-q_n-x_1|}.
		\end{equation}
	\end{itemize}

	By \eqref{11IIIcase1}-\eqref{11112g21}, one has that
	\begin{eqnarray}
		|\phi(p)| &\leq& e^{C\varepsilon q_n-L q_n+\beta_n  q_n}  (r_{\ell}+r_{\ell+1}+r_{\ell -1})
		\nonumber \\ 
		&&+e^{C\varepsilon q_n-2L   q_n+ \beta_n q_n} (r_{\ell+2}+r_{\ell -2}).
		\label{12g27}
	\end{eqnarray}
	Therefore, we have that
	\begin{eqnarray}
		r_{\ell}&\leq& e^{C\varepsilon q_n-L q_n+\beta_n  q_n}  (r_{\ell}+r_{\ell+1}+r_{\ell -1})
		\nonumber \\ 
		&&+e^{C\varepsilon q_n-2L   q_n+ \beta_n q_n} (r_{\ell+2}+r_{\ell -2}).
		\label{122g27}
	\end{eqnarray}
	Since $L>\frac{\beta}{2\eta}$, one has  that
	\begin{equation*}
		e^{C\varepsilon q_n-L q_n+\beta_n  q_n} \leq \frac{1}{2}.
	\end{equation*}
	By \eqref{122g27},  we have that
	\begin{equation}
		r_{\ell}\leq e^{C\varepsilon q_n-L q_n+\beta_n q_n}  (r_{\ell+1}+r_{\ell -1})
		+e^{C\varepsilon q_n-2L   q_n+ \beta_n q_n} (r_{\ell+2}+r_{\ell -2}).
		\label{1122g27}
	\end{equation}
	By \eqref{g500}, one has 
	\begin{equation}\label{12g24}
		r_{\ell+2} \leq e^{C\varepsilon q_n} e^{Lq_n} r_{\ell+1},  r_{\ell-2} \leq e^{C\varepsilon q_n} e^{L q_n} r_{\ell-1}.
	\end{equation}

	By \eqref{1122g27} and  \eqref{12g24}, 
	one has that
	\begin{equation}\label{12g25}
		r_{\ell }\leq e^{C\varepsilon q_n-L q_n+\beta_n  q_n}  (r_{\ell+1}+r_{\ell -1}) .
	\end{equation}
	
	Thus in order to prove the theorem, it suffices to exclude the case $j_0\in I_1$.
	
	Suppose  $j_0\in I_1$. 
	Following the proof of \eqref{12g27}, we have that
	\begin{equation*}
		|\phi(0)|\leq \frac{1}{2}.
	\end{equation*}
	This  contradicts  $ \phi(0)=1$.
\end{proof}
\begin{proof}[\bf Proof of Theorem \ref{main2}]
	Once we have Theorems \ref{main3}, \ref{main4}, \ref{main5} and Lemma \ref{le2} at hand,  Theorem \ref{main2} follows from standard iterations. 
	See ~\cite{jl18,liuetds20,l21} for example. 
\end{proof}

\section{Proof of Corollaries  and Claims 1-3}\label{Sclaim}

\begin{proof}[\bf Proof of Corollary \ref{main1}]
	It follows from 
	 Remark \ref{re}, \eqref{g6} and Theorem \ref{main2} with $\eta=10^{-2}$. 
\end{proof}

\begin{proof}[\bf Proof of Corollary \ref{cor1}]
	
	Let 	$a_0=0$, $a_1=\frac{2}{\mu}+10$, $q_{-1}=0$ and $q_{0}=1$.
	For $n\geq 1$, define $a_n$ and $q_n$  inductively  by 
	\begin{equation}\label{g1new}
	q_{n}=a_n q_{n-1}+q_{n-2}.
	\end{equation}
	and
		\begin{equation}\label{g2new}
	a_{n+1}=\left\lfloor e^{\mu q_{n}  } \right\rfloor.
	\end{equation}

	Let  $	\alpha=[a_1,a_2,\cdots,a_n,\cdots]$.
	By some simple facts of the continued fraction expansion, one has  that $\beta(\alpha)=\mu $ (see Section 2 for more details). Now Corollary \ref{cor1} follows from Corollary \ref{main1}.
\end{proof}
\begin{proof}[\bf Proof of Corollary \ref{cor2}]
It  follows from  Theorem \ref{main2}, Lemma \ref{le01} and  the  construction of $\alpha$ in Corollary \ref{cor1}. 
\end{proof}

\begin{proof}[\bf Proof of Claim 1]
	By the construction of $I_1$ and $I_2$ in Claim 1,  
	we have   the following estimates:
	\begin{itemize}
		\item  for any $j_1,j_2\in I_1 \cup I_2 $ with $j_1\neq j_2$,
		one has  that
		\begin{equation}\label{Appenoct1}
			j_1-j_2=l_1q_n+l_2,
		\end{equation}
		where $1\leq l_2<q_n$ and $|l_1|\leq 40\frac{b_{n+1}}{q_n}+4$. Therefore, by \eqref{gdc5},
		\begin{equation}\label{Appenoct2}
			||(j_1-j_2)\alpha||_{\R/\Z}\geq \frac{1}{4q_n}.
		\end{equation}
		\item 
		for any $j_1,j_2\in I_1 \cup I_2 $ with $j_1\neq j_2$,
		one has  
		\begin{equation}\label{Appenoct3}
			(j_1+j_2)=k_j+l_1q_n+l_2,
		\end{equation}
		where $0\leq l_2<q_n$ and $|l_1|\leq 40\frac{b_{n+1}}{q_n}+4$. Therefore by \eqref{gdc2},  for $l_2\geq 1$,
		\begin{equation}\label{Appenoct4}
			||2\theta+(j_1+j_2)\alpha||_{\R/\Z}\geq \frac{1}{4q_n},
		\end{equation}
		and  by \eqref{gdc1} and \eqref{g0},
		for $l_2= 0$
		\begin{equation}\label{Appenoct5}
			||2\theta+(j_1+j_2)\alpha||_{\R/\Z} \geq  \frac{\eta}{10 q_{n+1}}\geq e^{-(\beta_n+\varepsilon)q_n}.
		\end{equation}
	\end{itemize}
	
	Moreover, for $j_1\in I_1\cup I_2$, there is at most one $j_2\in I_1\cup I_2$  such that the $l_2$ in 
	\eqref{Appenoct3} is 0. 
	By the standard arguments (e.g.  Appendices in ~\cite{jl18,liuetds20}), we have that  for any $m\in I_1\cup l_2$,
	\begin{equation*}
		{\rm Lag}_m\leq \beta_n q_n+ \varepsilon q_n.
	\end{equation*}
	
\end{proof}

\begin{proof}[\bf Proof of Claim 2]
	By the construction of $I_1$ and $I_2$ in Claim 2,  and Proposition \ref{prop},
	we have   the following estimates:
	\begin{itemize}
		\item  for any $j_1,j_2\in I_1 \cup I_2 $ with $j_1\neq j_2$,
		one has  that
		\begin{equation}\label{1Appenoct1}
			j_1-j_2=l_1q_n+l_2,
		\end{equation}
		where $0\leq l_2<q_n$ and $|l_1|\leq 40\frac{b_{n+1}}{q_n}+4$. Therefore, for $l_2\neq 0$
		\begin{equation}\label{1Appenoct2}
			||(j_1-j_2)\alpha||_{\R/\Z}\geq \frac{1}{4q_n},
		\end{equation}
		and 
		for $l_2= 0$,   
		\begin{equation}\label{11Appenoct2}
			||(j_1-j_2)\alpha||_{\R/\Z}\geq  \frac{1}{2 q_{n+1}}\geq e^{-(\beta_n+\varepsilon)q_n}.
		\end{equation}
		\item  for any $j_1,j_2\in I_1 \cup I_2 $ with $j_1\neq j_2$,
		one has  
		\begin{equation}\label{1Appenoct3}
			(j_1+j_2)=k_j+l_1q_n+l_2,
		\end{equation}
		where $0\leq l_2<q_n$ and $|l_1|\leq 40\frac{b_{n+1}}{q_n}+4$. Therefore, for $l_2\neq 0$
		\begin{equation}\label{1Appenoct4}
			||2\theta+(j_1+j_2)\alpha||_{\R/\Z}\geq \frac{1}{4q_n},
		\end{equation}
		and 
		for $l_2= 0$
		\begin{equation}\label{11Appenoct5}
			||2\theta+(j_1+j_2)\alpha||_{\R/\Z} \geq  \frac{\eta}{10 q_{n+1}}\geq e^{-(\beta_n+\varepsilon)q_n}.
		\end{equation}
	\end{itemize}
	
	Moreover, for any $j_1\in I_1\cup I_2$, there is at most one $j_2\in I_1\cup I_2$ with  $j_2\neq j_1$ such that the $l_2$ in 
	\eqref{1Appenoct1} is 0.  For any $j_1\in I_1\cup I_2$, there is at most two $j_2\in I_1\cup I_2$ with  $j_2\neq j_1$ such that the $l_2$ in 
	\eqref{1Appenoct3} is 0. 
	Following the discussion in Claim 1,  we have that  for any $m\in I_1\cup l_2$,
	\begin{equation*}
		{\rm Lag}_m\leq 3\beta_n q_n+ \varepsilon q_n.
	\end{equation*}
	
\end{proof}
\begin{proof}[\bf Proof of Claim 3]
	By the construction of $I_1$ and $I_2$ in Claim 3,  and Proposition \ref{prop},
	we have   the following estimates:
	\begin{itemize}
		\item  for any $j_1,j_2\in I_1 \cup I_2 $ with $j_1\neq j_2$,
		one has  that 
		\begin{equation}\label{21Appenoct1}
			j_1-j_2=l_1q_n+l_2,
		\end{equation}
		where $0\leq l_2<q_n$ and $|l_1|\leq 40\frac{b_{n+1}}{q_n}+4$. Therefore, for $l_2\neq 0$
		\begin{equation}\label{21Appenoct2}
			||(j_1-j_2)\alpha||_{\R/\Z}\geq \frac{1}{4q_n},
		\end{equation}
		and 
		for $l_2= 0$
		\begin{equation}\label{211Appenoct2}
			||(j_1-j_2)\alpha||_{\R/\Z}\geq  \frac{1}{2 q_{n+1}}\geq e^{-(\beta_n+\varepsilon)q_n}.
		\end{equation}
		\item  for any $j_1,j_2\in I_1 \cup I_2 $,
		\begin{equation}\label{211Appenoct5}
			||2\theta+(j_1+j_2)\alpha||_{\R/\Z} \geq  \frac{\eta}{20 q_{n}}.
		\end{equation}
	\end{itemize}
	
	Moreover, for $j_1\in I_1\cup I_2$, there is at most one $j_2\in I_1\cup I_2$ with  $j_2\neq j_1$ such that the $l_2$ in 
	\eqref{21Appenoct1} is 0.  
	Following the discussion in Claim 1,   we have that  for any $m\in I_1\cup l_2$,
	\begin{equation*}
		{\rm Lag}_m\leq \beta_n q_n+ \varepsilon q_n.
	\end{equation*}
	
\end{proof}
\section*{Acknowledgments}
W. L. was  supported by   NSF  DMS-2000345 and DMS-2052572.

\bibliographystyle{abbrv} 
\bibliography{Exa}

\end{document}